\newtheorem{theorem}{Theorem}
\newtheorem{corollary}{Corollary}
\newtheorem{lemma}[theorem]{Lemma}
\newcommand{\diag}{\operatorname{diag}}
\newcommand{\rank}{\operatorname{rank}}
\renewcommand{\vec}{\operatorname{vec}}
\def\BibTeX{{\rm B\kern-.05em{\sc i\kern-.025em b}\kern-.08em
    T\kern-.1667em\lower.7ex\hbox{E}\kern-.125emX}}
\newcommand{\Real}{\mathbb{R}}
\newcommand{\Complex}{\mathbb{C}}
\newcommand{\Integer}{\mathbb{Z}}
\newcommand{\cb}{\mathbf{c}}
\newcommand{\gb}{\mathbf{g}}
\newcommand{\xb}{\mathbf{x}}
\newcommand{\zb}{\mathbf{z}}
\newcommand{\omegab}{\boldsymbol{\omega}}
\newcommand{\Cmat}{\mathbf{C}}
\newcommand{\Dc}{\mathcal{D}}
\newcommand{\Lc}{\mathcal{L}}
\newcommand{\krank}{\mathrm{krank}}
\begin{document}

\title{Dynamic Tomography Reconstruction by Projection-Domain Separable Modeling  
\thanks{This research was supported in part by Los Alamos National Labs under Subcontract No. 599416/CW13995.}
}

\author{%
\IEEEauthorblockN{Berk Iskender and Yoram Bresler}
\IEEEauthorblockA{%
\textit{Department of ECE and Coordinated Science Lab} \\
\textit{University of Illinois at Urbana-Champaign, IL, USA.}%
}%
\and
\IEEEauthorblockN{Marc L. Klasky}
\IEEEauthorblockA{%
\textit{Los Alamos National Laboratory,} \\ \textit{Los Alamos, NM, USA.}%
}%
}% 

\maketitle

\begin{IEEEkeywords}
Dynamic tomography, Partially-separable, Bilinear, Unique recovery.
\end{IEEEkeywords}

\begin{abstract}
In dynamic tomography the object undergoes changes while projections are being acquired sequentially in time. The resulting inconsistent set of projections cannot be used directly to reconstruct an object corresponding to a time instant. Instead, the objective is to reconstruct a spatio-temporal representation of the object, which can be displayed as a movie. We analyze conditions for unique and stable solution of this ill-posed inverse problem, and present a recovery algorithm, validating it experimentally.
We compare our approach to one based on the recently proposed GMLR variation on deep prior for video, demonstrating the advantages of the proposed approach. 
\end{abstract}
\section{Introduction}

\IEEEPARstart{T}{he} dynamic tomography problem addresses the recovery of a time-varying object from projections acquired sequentially at specific time instants. Since the object evolves in time, and too few projections (only one, in the extreme case) are acquired at any time instant, they are
insufficient to reconstruct the object at any time.
The problem arises in the field of medical imaging \cite{10.1093/ehjci/jew044}, imaging of fluid flow processes \cite{scanziani2018situ,o2011dynamic}, and certain microscopic tomography tasks \cite{maire201620}. 

Previous work in this field includes \cite{willis95, willis95_2}, which  provides  theoretical guarantees of unique and stable reconstruction, and an algorithm of reconstruction of spatially-localized temporal objects using an optimal sampling pattern. This approach is limited by its assumptions of  temporal and spatial bandlimits. 

A two-step algorithm \cite{ma11081395} that alternates between estimating the motion field and the object has promising empirical results, but there are no guarantees for unique recovery. Others 
\cite{robinson2003fast, xiong2014linearly} assume the object information a priori, and use the properties of the Radon transform to efficiently recover the motion field.

Work on dynamic MRI \cite{Haldar2010,ma2019dynamic,ma2021dynamic} considers a partially separable object model (PSM) and uses results from the field of low-rank matrix recovery for reconstruction. This framework is applicable to the problem considered in this work. However, the method is not tailored specifically for tomography, and does not provide a clear insight into designing the angular sampling order, or relevant condition numbers.

\textit{Contributions:} We present a new unsupervised dynamic tomographic reconstruction algorithm dubbed ProSep that uses a special bilinear partially-separable model for the projections of a time-varying object.
We consider a specific object-independent angular sampling order for time-sequential sampling of the projections for this model and analyze factors affecting uniqueness and stability of the solution. ProSep does not use any spatial prior for the object, but in numerical experiments shows performance superior to the recently proposed GMLR \cite{Hyder2020} - a deep image prior model for video. We expect that combining a spatial image prior with ProSep will improve its performance even further.

\section{Problem Statement}
In a 2D setting, the goal in dynamic tomography is to reconstruct a time-varying object $f(\xb,t)$, $\xb \in \Real^2$ 
vanishing outside a disc of diameter $D$, 
from its projections  
\begin{align}
    g(s,\theta,t) = \mathcal{R}_\theta \{f(\mathbf{x},t)\}
\end{align}
obtained using the Radon transform operator $\mathcal{R}_\theta$ at angle $\theta$. We consider time-sequential sampling of a single $\theta$ at each time, assuming that the acquisition of a projection is fast enough that the object is 
essentially 
static within the sampling time, but the variation between samples cannot be ignored.
Assuming sampling uniform in time, the acquired data is
\begin{equation}
\label{eq:sampled_proj}
    \{g(s,\theta_p, t_p)\}_{p=0}^{P-1}, \,\, \forall s, t_p = p\Delta_t,
\end{equation}
where $s$ is the offset of the line of integration from the origin (i.e., detector position), and $P$ is the total number of projections (and temporal samples) acquired. The $s$ variable is uniformly sampled to $\{s_j\}_{j=1}^{J}$, and we assume that this sampling is fine enough to not affect the accuracy of the reconstruction, and therefore suppress it in the notation unless relevant. We consider the sequence $\{\theta_p\}_{p=0}^{P-1}$, with $\theta_p \in[0, 2 \pi]$, which we call the angular sampling scheme, to be a free design parameter\footnote{An arbitrary angular sampling scheme is easily implemented in radial MRI, but also in CT systems such as micro-CT and industrial CT where the factor limiting acquisition speed is the x-ray exposure or sensor readout, rather than the rotation of the object or source.}. 
  
Our objective is to reconstruct a time-sequence, a movie of the object $\{f(\xb,t_p)\}_{p=0}^{P-1}$ from the time-sequential projections in \eqref{eq:sampled_proj}. For a static object, $f(\xb)$ can be reconstructed from $g$ using the inverse {R}adon transform $\mathcal{R}^{-1}$, which is well-approximated in practice by the filtered backprojection (FBP) algorithm for $P$ large enough \cite{epstein2003introduction}. However, time-sequentially acquired projections \eqref{eq:sampled_proj}  are inconsistent because different projections correspond to different objects, and a direct reconstruction using $\mathcal{R}^{-1}$ leads to severe artifacts. 
 
We wish to reconstruct the time series of the dynamic object using minimal and verifiable assumptions, and to analyze the effect of various problem parameters, including the  sampling scheme, on the uniqueness and stability of the reconstruction. We use a high benchmark for accuracy:  FBP-based reconstruction of each frame $f(\xb,t_p)$ from the complete set of $P$ projections \emph{taken simultaneously,} i.e., using a total of $P^2$ rather than just $P$ projections.

\section{Partially Separable Model (PSM)}
\subsection{Partially separable model in the object domain}
The representation of a dynamic object $f(\mathbf{x},t)$ by a $K$-th order partially separable model (PSM) is the series expansion
\begin{equation} \label{eq:parsep}
    f(\mathbf{x},t) = \sum_{k=0}^{K-1} f_{k}(\mathbf{x}) \psi_{k}(t),
\end{equation}
which is known to be dense in $\mathcal{L}_2$ \cite{methods_math_phys}, meaning that any finite energy object can be approximated arbitrarily well by such a model of sufficiently high order. Empirically, it is found that even modest values of $K$ provide high accuracy in applications to MR cardiac imaging \cite{Haldar2010,ma2019dynamic,ma2021dynamic}, however we have not found a quantitative analysis of this phenomenon. Our analysis (see Appendix~\ref{app:par-sep_model_for_aff_motion}) shows that for a spatially bandlimited object, a  time-varying affine transformation (i.e, combination of time-varying translation, scaling, and rotation) of bounded magnitude leads to a good approximation by a low order PSM model. Because our benchmark is FBP-based tomographic reconstruction of a static object from $P$ simultaneous projections, which is inherently spatially bandlimited, the above analysis lends support to the use of the PSM with modest $K$. This reduces the number of degrees of freedom in the dynamic object, enabling reconstruction with less data. 

\subsection{Partially separable model in the projection domain}
While the PSM has been used for dynamic imaging in MRI \cite{Haldar2010,ma2019dynamic,ma2021dynamic}, we gain additional insight into its role in tomography by carrying it into the projection domain, and the harmonic representation of the projections.
For real-valued time-varying projections, the circular harmonic expansion is
\begin{equation}
\label{eq:proj_harm_exp}
    g(s,\theta,t) = \sum_{n\in \Integer} h_n(s,t)e^{jn\theta}, \quad h_n = h^*_{-n} , 
    \end{equation}
and we have the following result.

\begin{theorem}
\label{thm:proj_special_par_sep}
Let an object $f(\bold{x},t)$ vanish outside a disk of diameter $D$ and  have the partially-separable representation 
\begin{equation}
\label{eq:f_parsep_w_err}
f(\bold{x},t) = \sum_{k=0}^{K-1} f_k(\bold{x}) \psi_k(t) + \gamma_f(\bold{x},t),
\end{equation}
with  error term  bounded as $||\gamma_f||_2^2 \leq \epsilon_f$, for some $\epsilon_f > 0$. Then the projections admit the representation
\begin{equation}
\label{eq:proj_par_sep_rep_w_err}
g(s,\theta,t) = \sum_{n \in \Integer} h_{n}(s,t) e^{jn\theta} + \gamma_g(s,\theta,t)
\end{equation}
with $||\gamma_g||_2^2 \leq  \pi D\epsilon_f$ and $h_n(s,t)$ represented by the special
partially separable model
\begin{equation} \label{eq:par-sep-special}
    h_n(s,t) = \sum_{k=0}^{K-1} \beta_{n,k}(s) \psi_{k}(t) .
\end{equation}
\end{theorem}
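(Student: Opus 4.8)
The plan is to exploit the linearity of the Radon transform together with the circular-harmonic expansion of the projections of the spatial basis functions, and then to control the residual by a Cauchy--Schwarz estimate on line integrals. First I would apply $\mathcal{R}_\theta$ to the representation \eqref{eq:f_parsep_w_err}. By linearity this splits the projections as $g(s,\theta,t) = \sum_{k=0}^{K-1} [\mathcal{R}_\theta f_k](s)\,\psi_k(t) + \gamma_g(s,\theta,t)$, where $\gamma_g := \mathcal{R}_\theta\{\gamma_f\}$ carries all of the modeling error. This isolates the structural question from the quantitative one.

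For the structural part, note that for each fixed $s$ the static projection $[\mathcal{R}_\theta f_k](s)$ is a $2\pi$-periodic function of $\theta$, hence admits a circular-harmonic (Fourier-in-$\theta$) expansion $[\mathcal{R}_\theta f_k](s) = \sum_{n\in\Integer} \beta_{n,k}(s)\,e^{jn\theta}$ with coefficients $\beta_{n,k}(s)$. Substituting this and interchanging the finite sum over $k$ with the sum over $n$ (legitimate because $K<\infty$) gives $g = \sum_{n\in\Integer}\bigl(\sum_{k=0}^{K-1}\beta_{n,k}(s)\psi_k(t)\bigr)e^{jn\theta} + \gamma_g$. Reading off the bracketed coefficient yields exactly the special partially-separable form \eqref{eq:par-sep-special}, $h_n(s,t)=\sum_{k=0}^{K-1}\beta_{n,k}(s)\psi_k(t)$, which settles the representation claim.

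The substantive step is the error bound. For fixed $t$ and $\theta$, $\gamma_g(s,\theta,t)$ is the integral of $\gamma_f(\cdot,t)$ along the chord at offset $s$ of the disk of diameter $D$. Since every such chord has length at most $D$, Cauchy--Schwarz gives $|\gamma_g(s,\theta,t)|^2 \le D \int_{\mathrm{chord}} |\gamma_f|^2\, d\ell$. Integrating over $s$ and applying Fubini -- the chords at a fixed $\theta$ foliate the plane, and the $(s,\ell)$ coordinates reproduce Lebesgue measure on $\Real^2$ -- collapses the right-hand side to $\|\gamma_f(\cdot,t)\|_2^2$, so $\int |\gamma_g(s,\theta,t)|^2\, ds \le D\,\|\gamma_f(\cdot,t)\|_2^2$ for every $\theta$. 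Integrating over the non-redundant angular range (a half-turn, which already parametrizes the full space of lines) contributes the factor $\pi$, and integrating over $t$ replaces $\|\gamma_f(\cdot,t)\|_2^2$ by $\|\gamma_f\|_2^2 \le \epsilon_f$, giving $\|\gamma_g\|_2^2 \le \pi D\,\epsilon_f$.

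I expect the main obstacle to be pinning down the sharp constant $\pi D$ rather than a loose multiple of it. Two places demand care: the Fubini change of variables, where one must verify that the line--offset coordinates carry the planar measure with unit Jacobian so that no spurious factor enters; and, more importantly, the angular bookkeeping, since the Radon symmetry $\mathcal{R}_{\theta+\pi}f(s)=\mathcal{R}_\theta f(-s)$ means that integrating $\theta$ over a full turn double-counts every physical line and would inflate the constant to $2\pi D$. Restricting to the non-redundant parametrization of the line space is what delivers the stated $\pi D$. The structural part, by contrast, is routine once linearity and the harmonic expansion are in hand.
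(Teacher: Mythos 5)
Your proposal is correct and follows essentially the same route as the paper: apply $\mathcal{R}_\theta$ by linearity, expand each static projection $\mathcal{R}_\theta f_k$ in circular harmonics, swap the sums to read off \eqref{eq:par-sep-special}, and bound the residual via the $L^2$ estimate for the Radon transform on a disk. The only difference is that where the paper invokes this estimate as a cited lemma (Proposition 6.6.1 of Epstein, $\int |(\mathcal{R}f)(s,\theta)|^2\,ds \le 2L\|f\|_2^2$ with $L=D/2$), you re-derive it by Cauchy--Schwarz on chords of length at most $D$ --- which is the standard proof of that lemma --- and you make explicit the half-turn angular range that the paper uses implicitly to land on the constant $\pi D$ rather than $2\pi D$.
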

\textbf{Remark:} The PSM \eqref{eq:par-sep-special} is special, in that all the $h_n(s,t)$ are expanded in a common temporal ``basis" $\{\psi_k(t)\}$ - independently of $n$. Moreover, the approximation error of the projections using this special PSM  for the expansion coefficients  is bounded explicitly in terms of corresponding error in the object PSM \eqref{eq:f_parsep_w_err}. This can be used to show that the special PSM \eqref{eq:proj_par_sep_rep_w_err} is dense in the space of $\mathcal{L}_2$  functions $h_n(s,t)$.

The special PSM for the projections compresses their representation from $\approx PD^2$ parameters to $\approx KD^2+ KP$, with $K \ll P$. We introduce further compression to $\approx KD^2+ Kd$, by modeling each temporal function by $d \ll P$ parameters. 

\section{The recovery problem: Analysis}
\subsection{Representing the Sampled Projections}
Substituting  \eqref{eq:par-sep-special} into \eqref{eq:proj_harm_exp} it can be shown that the projections are given by 
\begin{equation}
\label{eq:face_splitting_expansion}
    {\gb}(s) = (\Theta \bullet \Psi) {\beta}(s)
\end{equation}
where ${\gb}(s) \in \mathbb{R}^{P}$, $\Psi \in \mathbb{R}^{P\times (K+1)}$, $\Theta \in \mathbb{C}^{P \times 2N+1}$, and ${\beta}(s) \in \mathbb{R}^{(2N+1)(K+1)}$ are defined as
\begin{align}
&{\gb}(s) = 
    \begingroup 
    \setlength\arraycolsep{2pt}\begin{bmatrix}
    g(s, \theta_1, t_1), &
    g(s, \theta_2, t_2), 
    &
    \ldots &
    g(s, \theta_P, t_P)
    \end{bmatrix}^T
    \endgroup, \nonumber\\
    &\Psi = \setlength\arraycolsep{1pt} \begin{bmatrix}
    \psi_0(t_1) & \ldots & \psi_K(t_1) \\
    \vdots & \vdots & \vdots \\
    \psi_0(t_P) & \ldots & \psi_K(t_P)
    \end{bmatrix},  \quad
    \widetilde{\Theta}= \begin{bmatrix}
    1 & e^{j\theta_1} & \ldots & e^{j2N\theta_1} \\
    \ldots & \ldots & \ldots & \ldots \\
    1 & e^{j\theta_P} & \ldots & e^{j2N\theta_P} \\
    \end{bmatrix}
    \nonumber\\
    &\Theta = \text{diag}([e^{-jN\theta_1}\,\, \ldots\,\, e^{-jN\theta_P}]) \widetilde{\Theta} \label{eq:Theta} \\
    &{\beta}(s) = 
    \begingroup 
    \setlength\arraycolsep{1pt}\begin{bmatrix}
    \beta_{-N,0}(s) &
    \ldots &
    \beta_{-N,K}(s) &
    \ldots &
    \beta_{N,0}(s) &
    \ldots &
    \beta_{N,K}(s)
    \end{bmatrix}^T
    \endgroup. \nonumber
\end{align}
and $\bullet$ denotes the Face-splitting product \cite{slyusar1999family},
where the $p$-th row of $\Theta \bullet \Psi$ is given by the Kronecker product of the respective rows, $\Theta_p \otimes \Psi_p$.

Next, we leverage the $\pi$-symmetry of flipped projections, $g(-s, \theta) = g(s, \theta+\pi) $, to double the number of equations for the same number of unknowns. Replacing $\theta_p$ by $\theta_p+\pi$ in \eqref{eq:Theta} corresponds to post-multiplication of $\Theta$ by a diagonal matrix with entries $e^{jn\pi}, n = -N, \ldots N$. Combining the resulting equations results in expanding model \eqref{eq:face_splitting_expansion}-\eqref{eq:Theta} to
\begin{align}
    \widehat{\gb}(s) &= 
    [\gb^T(s) , \gb^T(-s)]^T, 
    \widehat{\Psi} = 
    [\Psi^T, 
    \Psi^T ]^T, 
    \widehat{\Theta} = 
        [ \Theta^T ,
        \bar{\Theta}^T ]^T,
 \nonumber 
    \\
    {\widehat{\gb}}(s) &= (\widehat{\Theta} \bullet \widehat{\Psi}) {\beta}(s), 
    \quad \bar{\Theta} = \Theta 
    \,\, 
    \text{diag} \left\{ (-1)^n\right\}_{n=-N}^{N} . \label{eq:barTheta}
\end{align}

\subsection{Low Dimensional Model for  Temporal Functions $\Psi$}
\label{sec:low_dim_temp_fct}
Suppose that a given fixed number of $K$ sampled temporal functions $\psi_k \in \mathbb{R}^P$ reside in a $d$-dimensional subspace where $d>K$, but $d<<P$. 
They can be expressed
as $\psi_k = U z_k$ where $U \in \mathbb{R}^{P \times d}$ is a fixed interpolator and $z_k \in \Real^d$.  
Without loss of generality (wlog) we assume that $U^T U = I_d$, 
i.e.
$U$ has orthonormal columns.  Then, the forward model 
\eqref{eq:barTheta} becomes
\begin{equation}
\label{eq:low_d_forward_model}
    {\widehat{\gb}}(s) = (\widehat{\Theta} \bullet (\widehat{U}Z)) {\beta}(s), \quad \widehat{U} = [U^T, U^T]^T.
\end{equation}

\subsection{From Recovered Representation to Reconstructed Object}
Once \eqref{eq:low_d_forward_model} is solved for the unknowns $Z$ and $\beta(s) \, \forall s$, the reconstructed time series of the object is obtained by first forming the 
temporal functions $\Psi = UZ$ and then obtaining the harmonic expansion coefficients using \eqref{eq:par-sep-special}. 
This allows to obtain the projections at all $\theta \in [0,2\pi]$ for each $t$ using \eqref{eq:proj_harm_exp}. Finally, reconstructions at $\{t_p\}_{p=0}^{P-1}$ of the estimated projections are performed using FBP.

\subsection{Bilinear problem and necessary conditions for recovery}
\label{sec:bilinear_prob_and_nec_cond}
The problem of recovering $Z$ and ${\beta}$ from the sampled projections in \eqref{eq:sampled_proj} using \eqref{eq:low_d_forward_model} is the bilinear problem
\begin{align}
    &\forall s\, \text{find} \, {\beta}(s) \in \mathcal{B}, Z \in \mathcal{Z} \nonumber \\
    &\text{s.t.} \,\, (\widehat{\Theta} \bullet (\widehat{U}Z)){\beta}(s) = \widehat{\gb}(s),
    \label{eq:bilinear_eq}
\end{align}
where $\mathcal{B}$ and $\mathcal{Z}$ are appropriate constraint sets. We wish to study conditions for uniqueness of the solution to problem \eqref{eq:bilinear_eq} and its stability to perturbations in the data and the model. Since
$\beta(s)$ and $Z$ appear in product form, there is an inherent scaling ambiguity \cite{li2016identifiability} in \eqref{eq:bilinear_eq}. To remove it, we impose wlog the constraint that $Z^TZ =I$, or equivalently, that $\mathcal{Z}$ is the $d$-dimensional Stiefel manifold in $\Real^P$.

Next, we investigate conditions for uniqueness and stability of the solution to the bilinear problem. 
As shown in \cite{li2016identifiability}, a necessary condition is that when one of the two variables is fixed to a valid solution, the solution for the other is unique. This motivates the study of the linear inverse problems defined by fixing in turn one of the two variables $\beta(s) \,\,\forall s$ or $Z$ in \eqref{eq:bilinear_eq}.

\smallskip
\textbf{(i) Model Linear in ${\beta}(s)$}.
 Defining
\[ 
     L_1(Z) \triangleq \widehat{\Theta}\bullet \widehat{\Psi} = \widehat{\Theta} \bullet (\widehat{U}Z) 
\]
 yields
 \begin{equation} \label{eq:L1beta}
 \qquad  \widehat{g}(s) = L_1(Z) {\beta}(s)
 \end{equation}
with the $i$-th row of $L_1(Z)$  given by 
\begin{equation} \label{eq:L1Zi}
    L_1(Z)_{i:} = \mathbf{z}^T (A_i \otimes \widehat{U}_{i:}^T )
\end{equation}
where $\mathbf{z} \triangleq \vec (Z) \in \Real^d(K+1)$ and
\begin{equation}
\label{eq:A_i_def}
A_i \triangleq \widehat{\Theta}_{i:}^T \otimes I_{K+1} \in  \mathbb{C}^{(K+1)\times(2N+1)(K+1)} .
\end{equation}
For given $\widehat{\Psi}$ or $Z$, \eqref{eq:L1beta} is a linear inverse problem in ${\beta}(s)$. 

We have the following result for the full-rankness of $L_1$.

\begin{theorem}
\label{thm:L1_full_rank}
Let $L_1(Z) = \widehat{\Theta} \bullet \widehat{\Psi}$, where $\widehat{\Theta}$ and $\widehat{\Psi}$ are defined in \eqref{eq:barTheta}. Suppose that
 ${\Psi}$ with orthonormal columns is random drawn from an absolutely continuous probability distribution on the Stiefel manifold $V_{(K+1)}\big(\mathbb{R}^P\big)$.
If $2P \geq (K+1)(2N+1)$ and the $P$ view angles $\theta_i \in [0, \pi], i=1, \ldots, P$ are distinct, then $L_1$
has full column rank w.p.~1. 
\end{theorem}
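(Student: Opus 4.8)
The plan is to prove full column rank by a genericity (``almost sure'') argument built on analyticity. I would first observe that $L_1(Z)$ has full column rank iff the Gram determinant $\det\!\big(L_1^{*}L_1\big)$ is nonzero, and that, for fixed distinct angles $\{\theta_i\}$, this determinant is a real-analytic (in fact polynomial) function of the entries of $\Psi$, since the entries of $\widehat\Theta$ enter only as fixed complex constants. Restricting to the Stiefel manifold $V_{K+1}(\Real^P)$, which is a connected real-analytic manifold, I would invoke the standard dichotomy: a real-analytic function on such a manifold is either identically zero or has a zero set of measure zero with respect to the natural rotation-invariant measure. Because the law of $\Psi$ is absolutely continuous with respect to that measure, it then suffices to exhibit a \emph{single} $\Psi^{\star}\in V_{K+1}(\Real^P)$ for which $L_1$ attains full column rank; the full-rank event automatically has probability one.

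To organize the witness construction I would first exploit the structure created by the $\pi$-symmetry. Grouping the columns of $L_1$ by harmonic index $n$ writes $L_1=[\,D_{-N}\widehat\Psi\mid\cdots\mid D_{N}\widehat\Psi\,]$ with $D_n=\diag(\widehat\Theta_{:,n})$, and pairing row $i$ with row $P+i$ through the invertible map $(r,r')\mapsto(r+r',\,r-r')$ block-diagonalizes $L_1$, up to invertible row operations, into an even-harmonic block and an odd-harmonic block. Since $\widehat\Psi$ repeats $\Psi$ and $\widehat\Theta$ carries the factor $(-1)^n=e^{jn\pi}$, each block is itself a face-splitting product of a harmonic (generalized Vandermonde) matrix evaluated at the \emph{distinct} angles $\{\theta_i\}$ with the factor $\Psi$. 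Full column rank of $L_1$ is thus equivalent to full column rank of both blocks, and distinctness of the $\theta_i$ guarantees the underlying harmonic matrices are nondegenerate.

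The crux, and the step I expect to be the main obstacle, is the explicit construction of $\Psi^{\star}$ realizing full column rank in both blocks simultaneously. The difficulty is that a face-splitting (row-wise Kronecker) product does \emph{not} automatically inherit full column rank from the ranks of its two factors: the repeated-$\Psi$ and sign structure can obstruct independence unless the rows are apportioned correctly among the $K+1$ temporal modes. I would therefore choose $\Psi^{\star}$ with a block/selection support, partitioning the $P$ indices and assigning disjoint groups of rows to distinct columns before orthonormalizing, so that within each harmonic block the resulting submatrix decouples into Vandermonde-type pieces whose nonsingularity follows from the distinct $\theta_i$. Verifying that the available $2P$ rows can be distributed so as to make every block square-and-nonsingular is exactly the dimension bookkeeping that must be reconciled with the hypothesis $2P\ge (K+1)(2N+1)$, and this per-block accounting is the delicate point of the argument; once a single such $\Psi^{\star}$ is in hand, the analyticity argument of the first paragraph upgrades it to an almost-sure statement and completes the proof.
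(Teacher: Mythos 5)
Your first two paragraphs are sound, and the second one is a genuinely different route from the paper's. The paper never separates harmonic parities: it passes to the Gram matrix $L_1L_1^T=\widehat{\Theta}\widehat{\Theta}^T\odot\widehat{\Psi}\widehat{\Psi}^T$, reduces full column rank of $L_1$ to full column rank of $M=\bigl[\diag(\widehat{\psi}_1)V\ \cdots\ \diag(\widehat{\psi}_{K+1})V\bigr]$ where $V$ holds the left singular vectors of $\widehat{\Theta}$ (with $\krank(V^T)=2N+1$ by the Vandermonde structure), proves a genericity lemma for such $M$ with i.i.d.\ Gaussian $\Psi$ by a polynomial-determinant argument plus an explicit support-pattern witness, and finally transfers to Stiefel-distributed $\Psi$ via the SVD. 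Your plan replaces the Gaussian detour by a direct real-analyticity argument on the Stiefel manifold (fine), and replaces the Kruskal-rank reduction by the even/odd split induced by the $\pi$-symmetry, which is also correct: row $P+i$ of $L_1$ equals row $i$ times $\diag\{(-1)^n\}\otimes I_{K+1}$, so sum/difference row operations split $L_1$, after a column permutation, into an even-harmonic and an odd-harmonic block, each a face-splitting product of a Vandermonde matrix with distinct nodes $e^{j2\theta_i}$ against a \emph{single} copy of $\Psi$.

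The gap is in your third paragraph, and it is worse than deferred bookkeeping: the bookkeeping cannot be reconciled with the stated hypothesis. Of the $2N+1$ harmonics, $N+1$ share one parity, so one of your two blocks has $(N+1)(K+1)$ columns but only $P$ rows; your (valid) equivalence ``$L_1$ full column rank iff both blocks are'' therefore forces $P\geq(N+1)(K+1)$, strictly stronger than $2P\geq(2N+1)(K+1)$ whenever $K\geq 1$. In the boundary regime no witness $\Psi^{\star}$ exists, and your reduction \emph{refutes} the theorem rather than proving it: for $K+1=2$, $N=1$, $P=3$, distinct $\theta_i\in[0,\pi)$, one has $2P=6=(2N+1)(K+1)$, yet
\begin{equation*}
\rank L_1 \;\leq\; \min\bigl(P,\,1\cdot(K+1)\bigr)+\min\bigl(P,\,2\cdot(K+1)\bigr) \;=\; 2+3 \;=\;5\;<\;6
\end{equation*}
for \emph{every} $\Psi$. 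This is not a defect of your decomposition; the paper's own proof breaks at the corresponding point. In its auxiliary lemma, the witness construction for odd $n$ (and here $n=2N+1$ is always odd) prescribes the support of $\Psi$ through two different partitions of $[P]$ applied to the two halves of $\widehat{\Psi}$; since both halves are the same matrix $\Psi$, any index placed in differently labeled blocks by the two partitions receives contradictory values, so the claimed witness does not exist, and the structured $V$ coming from $\widehat{\Theta}$ is in fact a counterexample to that lemma. If you strengthen the hypothesis to $P\geq(N+1)(K+1)$ (satisfied in all of the paper's experiments), your plan does close: each parity block contains $\Psi$ only once, so the repetition obstruction disappears; a disjoint-support witness, $\Psi_{J_l,l}=\mathbf{1}/\sqrt{|J_l|}$ for a partition of $[P]$ into $K+1$ groups of size equal to the block's harmonic count (a valid Stiefel point, since disjointly supported unit columns are orthonormal), makes each block permutation-equivalent to a block-diagonal matrix with nonsingular Vandermonde blocks, and your analyticity argument then upgrades the two witnesses to an almost-sure statement by a union bound over the two measure-zero failure sets.
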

Theorem~\ref{thm:L1_full_rank}  implies that if $2P \geq (K+1)(2N+1)$, then for almost all (i.e., generic) $\Psi$ with orthonormal columns, $L_1$ will have full column rank. In practice, we implement the low-dimensional model ${\Psi} = {U}Z$ of Section~\ref{sec:low_dim_temp_fct} with a fixed $U$ satisfying $U^TU = I_d$ and $Z \in \mathbb{R}^{d \times (K+1)}$ with $d \geq K+1$ and $Z^TZ = I_{K+1}$. It then follows that $\Psi^T\Psi= I_{K+1}$. Numerical experiments in Section~\ref{sec:experiments}
for such structured  $\Psi$  resulted not only in full-rank $L_1(Z)$, but in fact in low condition number for appropriately chosen view-angle sampling scheme.

This provides confidence that this necessary full-rank condition for the uniqueness of the solution of \eqref{eq:bilinear_eq} will be satisfied.

\textbf{(ii) Model Linear in $Z$.} Here we introduce explicitly the discretization of the detector positions $s$ to $J$ values. For fixed $\beta$ it is shown in Appendix~\ref{app:C} that \eqref{eq:bilinear_eq} reduces to a linear inverse problem in $\zb \triangleq \text{vec}(Z)$
\begin{equation} \label{eq:Linz}
    \widehat{\gb} = \bold{L}_2({\beta}) \zb
\end{equation}
where vector $\widehat{\gb} \in \mathbb{R}^{2JP}$ includes the stacked elements of ${g}(s)_i$, $s= \{s_j\}_{j=1}^{J}$, $i=1, \ldots , 2P$ and  $\bold{L}_2({\beta}) \in \mathbb{R}^{2JP \times d(K+1)}$,  
with $i$-th row given by 
\begin{equation}
\label{eq:L_2_ith_row_def}
    L_2({\boldsymbol{\beta}(s)})^{(i)} \triangleq {\beta}(s)^T A_i^T ( I_{K+1} \otimes \widehat{U}_{i:} ) 
\end{equation}

Problem \eqref{eq:Linz} will have a unique solution for fixed $\boldsymbol{\beta}$ if and only if  $\bold{L}_2(\boldsymbol{\beta})$ has full column rank. 
Furthermore, the stability of the solution is governed by the condition number $\kappa(\bold{L}_2(\boldsymbol{\beta}))$.

\begin{theorem}
\label{thm:L2_cond_num}
Let $\Gamma \triangleq \sum_{j=1}^{J}  \boldsymbol{\beta}(s_j)\boldsymbol{\beta}(s_j)^T \succ 0$.
Then, $\bold{L}_2(\boldsymbol{\beta})$ is full column rank, and
\begin{equation} \label{eq:L2condbound}
    \kappa(\bold{L}_2(\boldsymbol{\beta})) \leq \sqrt{\kappa(\Gamma)}.
\end{equation}
\end{theorem}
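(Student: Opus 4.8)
The plan is to control the two extreme singular values of $\mathbf{L}_2(\boldsymbol{\beta})$ by sandwiching the Rayleigh quotient $\|\mathbf{L}_2(\boldsymbol{\beta})\zb\|^2/\|\zb\|^2$ between $\lambda_{\min}(\Gamma)$ and $\lambda_{\max}(\Gamma)$ up to a single positive scalar that is common to both bounds and therefore cancels in the condition number. I index the $2JP$ rows of $\mathbf{L}_2(\boldsymbol{\beta})$ by the pair $(i,j)$, with $i=1,\dots,2P$ the (symmetry-doubled) view index and $s_j$, $j=1,\dots,J$, the detector sample, so that the $(i,j)$ row is $\boldsymbol{\beta}(s_j)^T A_i^T(I_{K+1}\otimes\widehat U_{i:})$ from \eqref{eq:L_2_ith_row_def}.

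First I would collapse the action of a single row on $\zb=\vec(Z)$. Applying $(P\otimes Q)\vec(X)=\vec(QXP^T)$ to \eqref{eq:L_2_ith_row_def} turns the $(i,j)$ entry of $\mathbf{L}_2(\boldsymbol{\beta})\zb$ into the scalar $\widehat U_{i:}\,Z\,A_i\,\boldsymbol{\beta}(s_j)$, so that
\[
\|\mathbf{L}_2(\boldsymbol{\beta})\zb\|^2=\sum_{i=1}^{2P}\sum_{j=1}^{J}\bigl|w_i^T\boldsymbol{\beta}(s_j)\bigr|^2,\qquad w_i\triangleq A_i^T Z^T \widehat U_{i:}^T .
\]
Carrying out the inner sum over $j$ and using $\Gamma=\sum_j\boldsymbol{\beta}(s_j)\boldsymbol{\beta}(s_j)^T$ gives $\sum_j|w_i^T\boldsymbol{\beta}(s_j)|^2=w_i^T\Gamma\,\overline{w_i}$. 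Splitting $w_i$ into real and imaginary parts and using that $\Gamma$ is real symmetric, the imaginary cross-terms cancel and this equals $\mathrm{Re}(w_i)^T\Gamma\,\mathrm{Re}(w_i)+\mathrm{Im}(w_i)^T\Gamma\,\mathrm{Im}(w_i)$, which is squeezed between $\lambda_{\min}(\Gamma)\|w_i\|^2$ and $\lambda_{\max}(\Gamma)\|w_i\|^2$.

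The crux is then to show $\sum_i\|w_i\|^2$ is exactly proportional to $\|\zb\|^2$, so that the bound becomes view-independent. Writing $\|w_i\|^2=\widehat U_{i:}Z(\overline{A_i}A_i^T)Z^T\widehat U_{i:}^T$ and noting that every entry of $\widehat\Theta$ is a unit-modulus complex exponential, the mixed-product rule gives $\overline{A_i}A_i^T=(\overline{\widehat\Theta_{i:}}\,\widehat\Theta_{i:}^T)\otimes I_{K+1}=(2N+1)I_{K+1}$; hence $\sum_i\|w_i\|^2=(2N+1)\,\mathrm{tr}\!\bigl(ZZ^T\widehat U^T\widehat U\bigr)$. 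Because $\widehat U=[U^T,U^T]^T$ with $U^TU=I_d$ we have $\widehat U^T\widehat U=2I_d$, giving $\sum_i\|w_i\|^2=2(2N+1)\|\zb\|^2$. Combining the three displays yields $2(2N+1)\lambda_{\min}(\Gamma)\|\zb\|^2\le\|\mathbf{L}_2(\boldsymbol{\beta})\zb\|^2\le 2(2N+1)\lambda_{\max}(\Gamma)\|\zb\|^2$; since $\Gamma\succ0$ the lower bound is strictly positive, establishing full column rank, and the factor $2(2N+1)$ cancels in $\kappa(\mathbf{L}_2(\boldsymbol{\beta}))=\sigma_{\max}/\sigma_{\min}$, giving $\kappa(\mathbf{L}_2(\boldsymbol{\beta}))\le\sqrt{\kappa(\Gamma)}$.

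The main obstacle I anticipate is purely the real/complex bookkeeping: $A_i$ and hence $w_i$ are complex while $\Gamma,\boldsymbol{\beta},Z,U$ are real, so the Rayleigh-quotient bound must be applied to the real and imaginary parts separately, and one must verify the imaginary part of $w_i^T\Gamma\overline{w_i}$ vanishes by symmetry of $\Gamma$. The other step needing care is verifying the two structural identities $\overline{A_i}A_i^T=(2N+1)I_{K+1}$ and $\widehat U^T\widehat U=2I_d$, since it is exactly the unit-modulus harmonics and the $\pi$-symmetry doubling that make the scalar factor $2(2N+1)$ independent of the view angles and of $Z$, and hence cancellable in the condition number.
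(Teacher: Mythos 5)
Your proof is correct and takes essentially the same route as the paper's: both sandwich the Gram matrix $\mathbf{L}_2(\boldsymbol{\beta})^T\mathbf{L}_2(\boldsymbol{\beta})$ (you via the Rayleigh quotient, the paper via the Loewner order) between $c\,\lambda_{\min}(\Gamma)I$ and $c\,\lambda_{\max}(\Gamma)I$ using the same two structural identities, $A_iA_i^H=(2N+1)I_{K+1}$ and $\sum_i \widehat{U}_{i:}^T\widehat{U}_{i:}\propto I_d$, after which the common scalar cancels in the condition number. If anything, your handling of the complex conjugates and of the overall constant ($2(2N+1)$, versus the paper's $J(2N+1)$ and its unconjugated transposes) is the more careful of the two, though none of this affects the final bound.
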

Clearly, {$\Gamma \succ 0$, i.e.,} $\kappa(\Gamma) < \infty$ establishes full column rank of $\bold{L}_2(\boldsymbol{\beta})$ and uniqueness of the solution to \eqref{eq:Linz} for fixed $\boldsymbol{\beta}$.
This requires $J \geq (2N+1)(K+1)$ and that there are at least $(2N+1)(K+1)$ linearly independent vectors in the set $\{\beta(s_j)\}_{j=1}^{J}$. This condition can be somewhat restrictive, but \eqref{eq:L2condbound} is only an upper bound on $\kappa(\bold{L}_2(\boldsymbol{\beta}))$, which may not be tight. In practice, $\kappa(\bold{L}_2(\boldsymbol{\beta}))$ takes on small ($<10$) values even if $\Gamma \succ 0$ is not satisfied. Furthermore, $\kappa(\Gamma)$ typically decreases with increasing spatial resolution $J$ of the projections.

\subsection{Condition numbers and view angle sampling schemes}
\label{subsec:ConditionNumbers}
Evaluation of the condition numbers $\kappa(L_1)$ and $\kappa(\bold{L}_2(\boldsymbol{\beta}))$ in the separate problems \eqref{eq:L1beta} and \eqref{eq:Linz} enables explicit analysis of the effects of the view angle sampling scheme on stability. 
We consider three schemes: (i) progressive with $\Delta_\theta = \frac{2\pi}{P}$; 
(ii) random, $\theta_p \sim \mathcal{U}[0,2\pi), \forall p$; and (iii) bit-reversed in $[0,2\pi)$,
with angles obtained by the reversal of the binary representations of the progressive scheme \cite{chan2012influence}.
The intervals are selected as $[0, \pi)$ when $\pi$-symmetry is exploited.
We study the dependence of the condition numbers
$\kappa(L_1(Z))$, and $\kappa(\bold{L}_2(\boldsymbol{\beta}))$ on  view-angle sampling scheme, for  fixed  $K,\,\, N,\,\, P$. For the study of $L_1$, the temporal functions in $\Psi$ were set using orthonormal Legendre polynomials of increasing order. (Because their version sampled uniformly at $P$ points is not exactly orthonormal, it was orthonormalized by Gram-Schmidt.)
$\bold{L}_2(\boldsymbol{\beta})$,  the interpolator matrix is set to  $U \in \Real^{2P \times d}$ with elements $U_{ij} \sim \mathcal{N}(0,1)$ independent and identically distributed (iid) and 
$\beta_{i} \sim \mathcal{N}(0,1)$, iid.

\begin{table}[hbt]
\setlength{\tabcolsep}{4pt}
\renewcommand{\arraystretch}{0.85}
\centering
\begin{tabular}{@{}lccc@{}}
\toprule
\multicolumn{1}{c}{ }&\multicolumn{1}{c}{Progressive}& \multicolumn{1}{c}{Random}& \multicolumn{1}{c}{Bit-reversed}\\
\cmidrule(r){1-1}\cmidrule(lr){2-2}\cmidrule(lr){3-3}\cmidrule(lr){4-4}
  $\kappa(L_1)$ no symm./symm. & 4.2e+16 / 1.8e+16 & 103.2 / 8.3 & \textbf{11.7} / \textbf{3.0} \\
  \cmidrule(r){1-1}
  $\kappa(\bold{L}_2)$ & 1.2 & 1.2 & 1.2  \\
 \bottomrule
\end{tabular}
\caption{\small Condition numbers  for different angular sampling schemes for $K$=5, $N$=28, $P$=512. For the random scheme, $\kappa(L_1)$ and $\kappa(\bold{L}_2)$ are the best out of 1000 different realizations.}
\label{tab:L1_L2_cond_nums_view_sch}
\end{table}

Table \ref{tab:L1_L2_cond_nums_view_sch} shows that for the given set of parameters, although $\kappa(\bold{L}_2)$ is not affected by the sampling scheme,  $\kappa(L_1)$ depends on it significantly, and a naive progressive scheme results in a practically singular model, whereas the bit-reversed scheme improves substantially over the random case. 
\section{Recovery Algorithms}
In view of unavoidable noise and model approximations, we solve \eqref{eq:L1beta} in the least-squares sense. Define the loss function and the least-squares problem as
\begin{align} \label{eq:LS-beta}
    \mathcal{L}(\beta(s), Z) &=  
    || \hat{g}(s) - L_1(Z)\beta(s) ||_2^2\\
\label{eq:LS-bilinear}
   \min_{\substack
    {Z \in \mathbb{R}^{d \times (K+1)} ,\,\, \\ \beta(\cdot) : \mathbb{R} \rightarrow \mathbb{R}^{(2N+1)(K+1)}
    }
    } 
    &\sum_s  
    \mathcal{L}(\beta(s), Z).
\end{align}

Problem \eqref{eq:LS-bilinear} can be solved using the Variable Projection method (VarPRO) \cite{golub1973differentiation}. 
First, because the summed term $\mathcal{L}(\beta(s), Z)$ is nonnegative, for fixed $Z$, \eqref{eq:LS-bilinear}
is minimized by minimizing 
 \eqref{eq:LS-beta}
 pointwise w.r.t $\beta(s)$, so that
 \eqref{eq:LS-bilinear}
becomes
\begin{equation}
    \min_{Z \in \mathbb{R}^{d \times (K+1)}}
    \sum_s \Bigg[ \min_{\beta(s) \in \mathbb{R}^{(2N+1)(K+1)}} \mathcal{L}(\beta(s),Z) \Bigg]. \label{eq:nestedmin}
\end{equation}

Now, the inner minimization in \eqref{eq:nestedmin}
is a least squares problem by $\beta(s)$, optimized for ${\beta}^*(s) = L_1^{\dagger}(Z)\hat{g}(s)$.
Inserting this into the objective function and simplifying, we obtain
\begin{equation}
\begin{aligned}
    \mathcal{L}({\beta}^*(s),Z) &= \mathrm{tr}\{P_{R^\perp(L_1(Z))}\hat{g}(s)\hat{g}(s)^T\}, 
\end{aligned}
\end{equation}
where $P_{R^\perp(L_1(Z))}$ is the orthogonal projection matrix onto the orthocomplement of the range space $R \left( L_1(Z) \right)$. 
of $L_1(Z)$.
Then \eqref{eq:nestedmin} reduces to
    \begin{align}
        \min_{Z \in \mathbb{R}^{d\times(K+1)}} \mathrm{tr}\{P_{R^\perp(L_1(Z))} \Xi(\hat{\boldsymbol{g}}) \}, \label{eq:minZ}
    \end{align}
where $P_{R^\perp(L_1(Z))}$ is the orthogonal projection matrix onto the orthocomplement of the range space $R \left( L_1(Z) \right)$ and where $\Xi(\hat{\boldsymbol{g}}) \triangleq \sum_s \hat{g}(s)\hat{g}(s)^T$ and $\Xi(\hat{\boldsymbol{g}}) \in \mathbb{R}^{P \times P}$  without $\pi$-symmetry, or $\Xi(\hat{\boldsymbol{g}}) \in \mathbb{R}^{2P \times 2P}$ with it.
In this form, \eqref{eq:minZ} reveals an important fact: the optimum $Z$, which determines the temporal functions $\psi_k(t)$, does not depend on the detailed measurements $\hat{g}(s)$. Instead, it only depends on the data matrix $\Xi(\hat{\boldsymbol{g}})$, which aggregates all the measurement information. 

As written, \eqref{eq:minZ} does not have a unique solution w.r.t. $Z$, because $R \left( L_1(Z) \right)$
is invariant to scaling of the columns of $Z$. 
We remove this ambiguity by constraining $Z$ 
to have orthonormal columns, 
$Z^TZ = I$.
In the implementation we use a penalized form and gradient descent for minimization.

\section{Experiments} \label{sec:experiments}
We compare ProSep with the recent GMLR method \cite{Hyder2020}, an extension of deep image prior \cite{ulyanov2018deep} to video.
GMLR uses a generative model
to map latent codes $\zeta_t$ to images $f_t$ to reconstruct a video from incomplete measurements. It performs joint optimization of latent codes and the generator parameters to match the measured data, while enforcing smoothness on the sequence of $\zeta_t$ to achieve temporal smoothness in $f_t$. 
In our application of  GMLR   to dynamic tomography it solves 
\begin{align*}
& \hspace{1cm} \min _{\zeta_{1}, \ldots, \zeta_{T} ; \gamma}  \mathcal{L}(\boldsymbol{\zeta}, {\gamma}) \quad
\text{s.t.}\quad \mathrm{rank}(\boldsymbol{\zeta}) = r \\
& \text{where} \quad \mathcal{L}(\boldsymbol{\zeta}, \gamma) =\\ &\lambda \sum_{t=1}^{P}\left\|g_{t}-\mathcal{R}_{\theta(t)} G_{\gamma}\left(\zeta_{t}\right)\right\|_{2}^{2}+ 
(1-\lambda) \sum_{t=1}^{P-1} \left\|\zeta_{t+1}-\zeta_{t}\right\|_{2}^{2}
\end{align*}
where $g_t = \mathcal{R}_{\theta(t)}f_t$, $G_\gamma$ is the generator with parameters $\gamma$, $\boldsymbol{\zeta} = [\zeta_1 \,\, \zeta_2 \,\, \ldots \zeta_P]$, and $\lambda$  controls the similarity of consecutive latent codes. In \cite{Hyder2020}  the architecture of $G_\gamma$ is the DCGAN \cite{radford2015unsupervised} for a $64 \times 64$ image with minor changes. Since our experiments 
consider $f \in \mathbb{R}^{128\times 128}$, another upsampling layer was added to the original $G_\gamma$ configuration \cite{HyderGithub2019}.

We present the reconstruction results for ProSep
with and without leveraging the $\pi$-symmetry. Both settings use the bit-reversed angular scheme between $[0,\pi]$ and $[0,2\pi]$, respectively. The interpolator $U$ is set to a cubic spline interpolator for both cases. The Adam \cite{kingma2014adam} algorithm was used for optimizing $Z$ with learning rate of $0.2$. 
For GMLR, the learning rates for $\gamma$ and $\zeta$ were kept as in the posted code \cite{HyderGithub2019}. 
For each $P$,
the model was trained for
$6(10^{4})$ steps with $\lambda = 0.5$ and rank $r=4$. 
In all experiments, we use the synthetic dynamic object shown in Fig.~\ref{fig:phantom_evolution}. It is based on a 128x128 CT slice of a walnut \cite{lahiri2022sparse}, to which we applied a time-varying locally affine warp \cite{pwise_affine}.

Table \ref{tab:avg_rec_acc} shows the reconstruction PSNR (in dB), SSIM and mean absolute error (MAE) values. Using $\pi$-symmetry for $P$ views between $[0,\pi)$ allows larger $N$ and $K$ for stable recovery and thus improves the accuracy significantly for each $P$. 
GMLR, ProSep, and ProSep-Symm, are compared in Fig.~\ref{fig:phantom_comparison}. Both error figures and quantitative metrics show competitive performance of ProSep relative to GMLR, and better performance of ProSep-Symm. Furthermore, the ProSep and ProSep-Symm reconstructions of the arc-shaped feature in the zoom-in images are sharper.

\begin{figure}[htbp!]
\centering
\setlength{\tabcolsep}{-0.02cm}
\renewcommand{\arraystretch}{0.1}
\begin{tabular}{ccc}
\includegraphics[width=.22\linewidth]{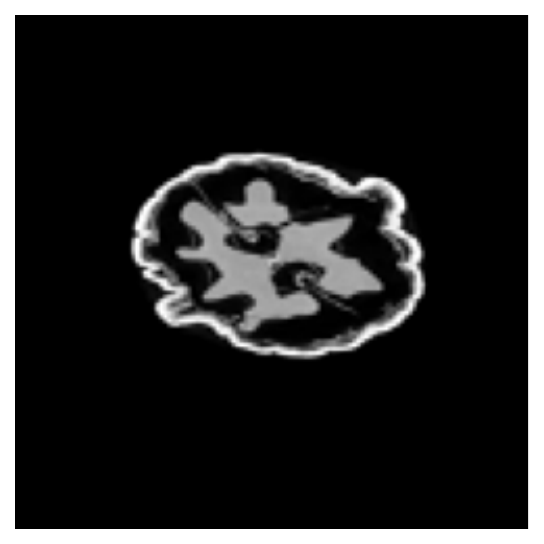} & 
\includegraphics[width=.22\linewidth]{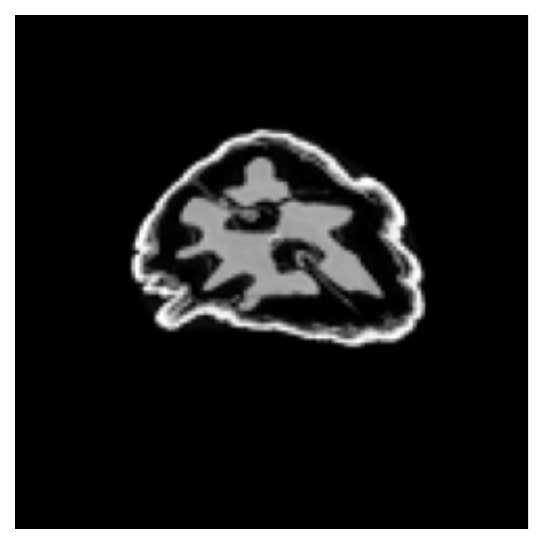} &
\includegraphics[width=.22\linewidth]{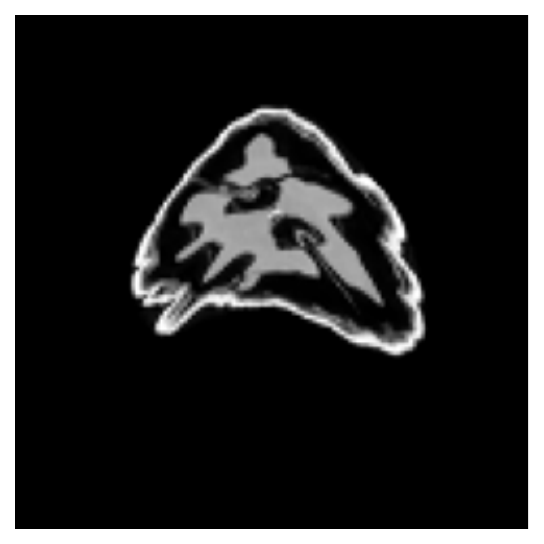}\\
(a)&(b)&(c)
\end{tabular}
\caption{\small The object undergoing pointwise affine transform at time instances (a) $t=0$, (b) $t=P/2$ and (c) $t=P$.}
\label{fig:phantom_evolution}
\end{figure}

\begin{table}[hbt!]
\setlength{\tabcolsep}{5pt}
\renewcommand{\arraystretch}{0.9}
\centering
\begin{tabular}{@{}clcccccc@{}}
\toprule
\multicolumn{1}{c}{$P$}&\multicolumn{1}{c}{Method}&\multicolumn{1}{c}{$K$}&\multicolumn{1}{c}{$N$}&\multicolumn{1}{c}{$d$}&\multicolumn{1}{c}{PSNR$\,\,$(dB)}& \multicolumn{1}{c}{SSIM}& \multicolumn{1}{c}{MAE}\\
\cmidrule(r){1-1}\cmidrule(lr){2-2}\cmidrule(lr){3-3}\cmidrule(lr){4-4}\cmidrule(lr){5-5}\cmidrule(lr){6-6}\cmidrule(lr){7-7}\cmidrule(lr){8-8}
  256 & GMLR  & - & - & - & 27.3 & 0.783 & 0.027 \\
  & ProSep & 3 & 24 & 4 & 26.9 & 0.894 & 0.022 \\
  & ProSep symm. & 5 & 30 & 6 & \textbf{30.4} & \textbf{0.928} & \textbf{0.015} \\
  \cmidrule(r){1-1}\cmidrule(lr){2-2}\cmidrule(lr){3-3}\cmidrule(lr){4-4}\cmidrule(lr){5-5}\cmidrule(lr){6-6}\cmidrule(lr){7-7}\cmidrule(lr){8-8}
  512 & GMLR & - & - & - & 31.1 & 0.876 & 0.017 \\
  & ProSep & 5 & 28 & 6 & 30.5 & 0.944 & 0.014 \\
  & ProSep symm & 7 & 48 & 8 & \textbf{35.1} & \textbf{0.959} & \textbf{0.010} \\
  \cmidrule(r){1-1}\cmidrule(lr){2-2}\cmidrule(lr){3-3}\cmidrule(lr){4-4}\cmidrule(lr){5-5}\cmidrule(lr){6-6}\cmidrule(lr){7-7}\cmidrule(lr){8-8}
  1024 & GMLR & - & - & - & 36.7 & 0.925 & 0.009 \\
  & ProSep & 7 & 48 & 8 & 36.8 & 0.979 & 0.007 \\
  & ProSep symm. & 9 & 56 & 10 & \textbf{39.5} & \textbf{0.980} & \textbf{0.006} \\
 \bottomrule
\end{tabular}
\caption{\small Average reconstruction accuracies for the complete 3D time-varying phantom in Fig. \ref{fig:phantom_evolution} for different $P$. ``ProSep symm" employs the $\pi$-opposite projection symmetry.}
\label{tab:avg_rec_acc}
\end{table}

\begin{figure}[htbp!]
\centering
\setlength{\tabcolsep}{-0.01cm}
\renewcommand{\arraystretch}{0.1}
\vspace{0cm}
\begin{tabular}{cccccc}
& 
{\small \textbf{GMLR}} & {\small\textbf{ProSep}} & \hspace*{-1.0ex}{\small \textbf{ ProSep-Symm}} & {\small \textbf{True Recon}} 
& 
\\
\rotatebox{90}{\makecell{
$\quad\,\,$Recon
}} &\includegraphics[width=.22\linewidth, trim={-0.1cm -0.1cm -0.1cm -0.1cm},clip]{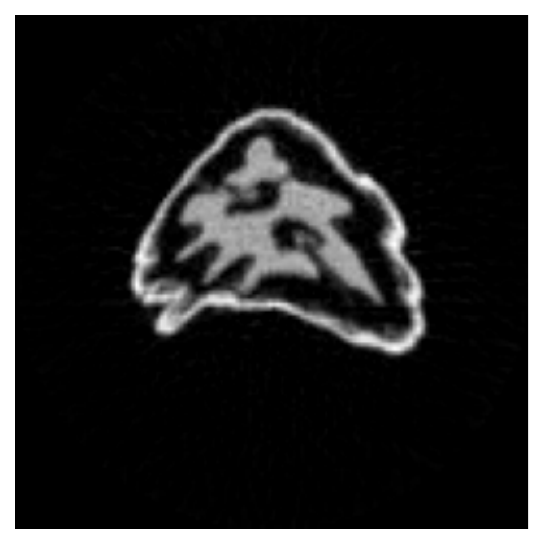} & 
\includegraphics[width=.22\linewidth, trim={-0.1cm -0.1cm -0.1cm -0.1cm},clip]{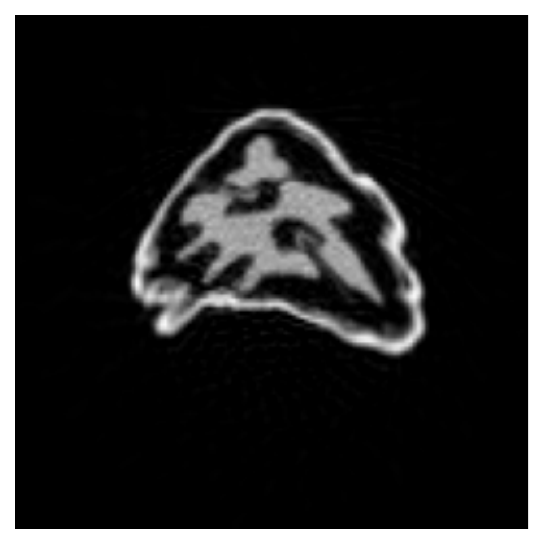} &
\includegraphics[width=.22\linewidth, trim={-0.1cm -0.1cm -0.1cm -0.1cm},clip]{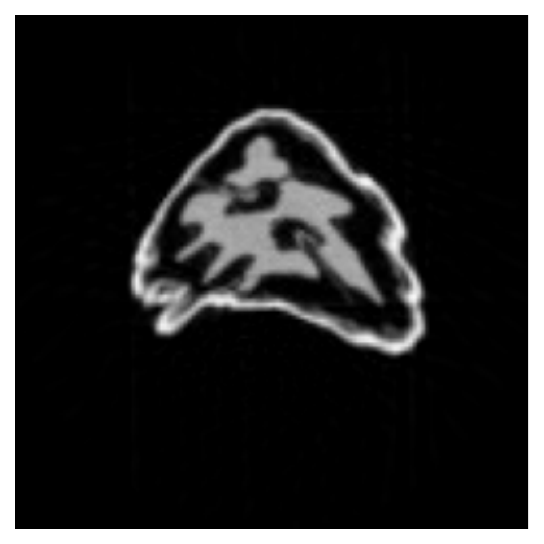} &
\includegraphics[width=.22\linewidth, trim={-0.1cm -0.1cm -0.1cm -0.1cm},clip]{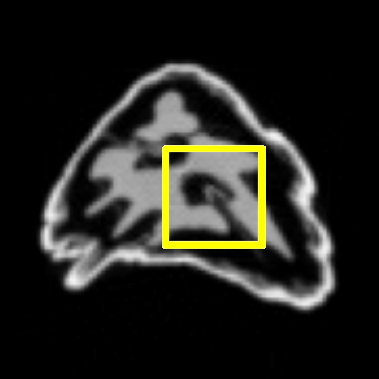} &
\hspace{-0.05cm}\includegraphics[width=.074\linewidth, trim={-0.1cm -0.1cm -0.1cm -0.1cm},clip]{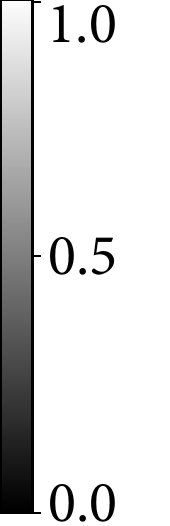} 
\vspace{-0.0cm}
\\

\rotatebox{90}{\makecell[c]{\centering
$\quad$Abs Error
}} &\includegraphics[width=.22\linewidth, trim={-0.1cm -0.1cm -0.1cm -0.1cm},clip]{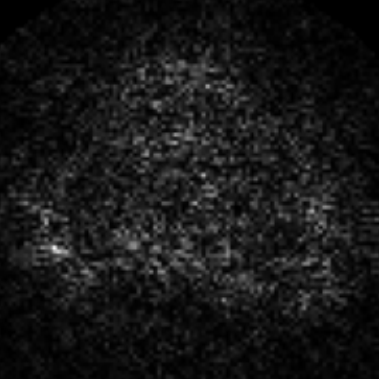} & 
\includegraphics[width=.22\linewidth, trim={-0.1cm -0.1cm -0.1cm -0.1cm},clip]{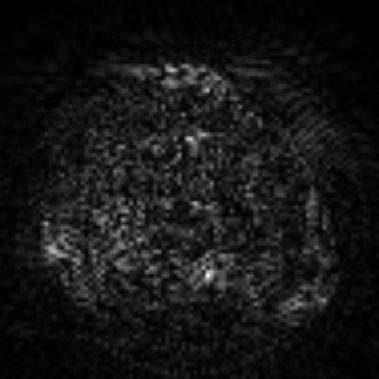} &
\includegraphics[width=.22\linewidth, trim={-0.1cm -0.1cm -0.1cm -0.1cm},clip]{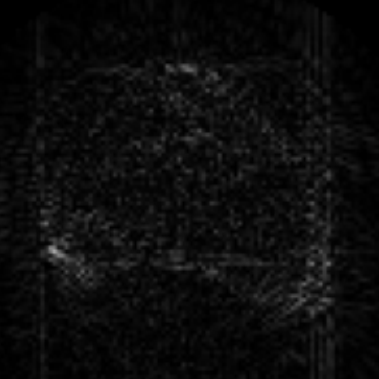} &
\includegraphics[width=.22\linewidth, trim={0.14cm 0.14cm 0.14cm 0.14cm},clip]{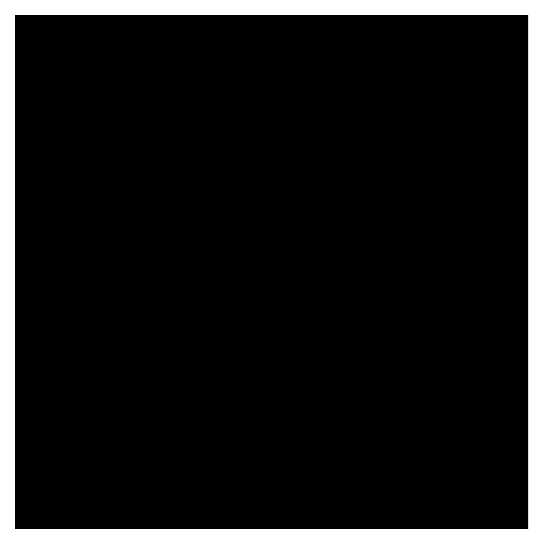} &
\hspace{0.01cm}\includegraphics[width=.08\linewidth, trim={-0.11cm -0.1cm -0.1cm -0.1cm},clip]{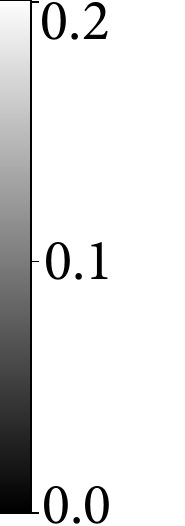} 
\vspace{-0.0cm}
\\
\rotatebox{90}{\makecell{
$\quad$Zoom-in
}} &\includegraphics[width=.22\linewidth, trim={-0.04cm -0.04cm -0.04cm -0.04cm},clip]{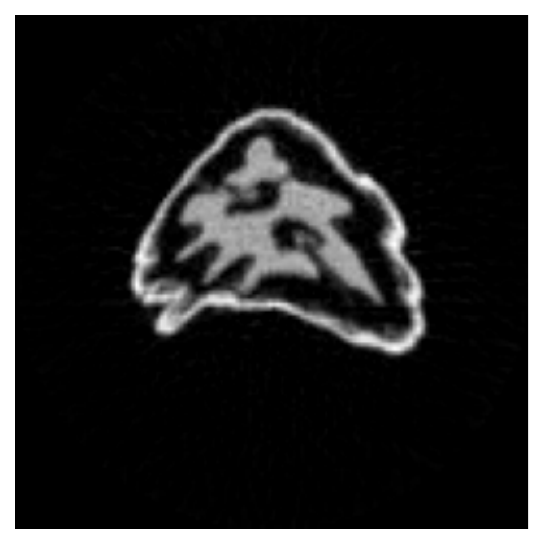} & 
\includegraphics[width=.22\linewidth, trim={-0.04cm -0.04cm -0.04cm -0.04cm},clip]{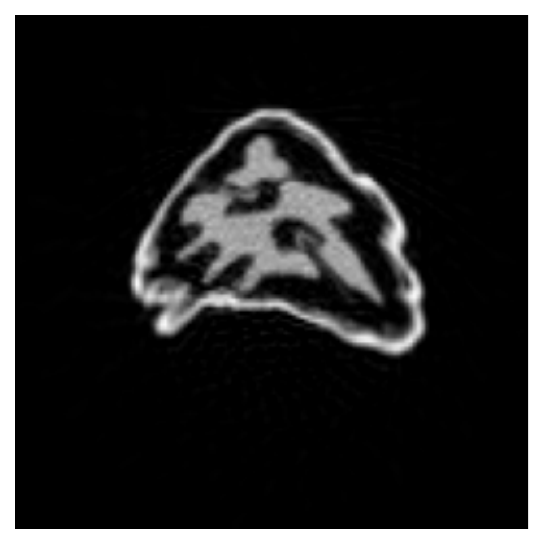} &
\includegraphics[width=.22\linewidth, trim={-0.04cm -0.04cm -0.04cm -0.04cm},clip]{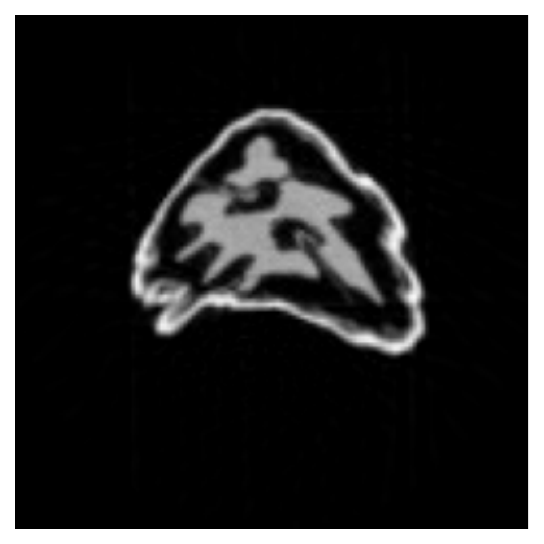} &
\includegraphics[width=.22\linewidth, trim={-0.04cm -0.04cm -0.04cm -0.04cm},clip]{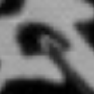}&
\hspace{-0.18cm}\includegraphics[width=.074\linewidth, trim={-0.1cm -0.1cm -0.1cm -0.1cm},clip]{Figures/cbars/recon_cbar_modified_normalized.pdf} 

\end{tabular}
\caption{\small Row 1: reconstructions of the object in Fig. \ref{fig:phantom_evolution} at $t=990/1024$ for different methods;  Row 2:  absolute deviations from ``True Recon"; Row 3: zoom-in of the yellow box region. 
}
\label{fig:phantom_comparison} 
\end{figure}

\section{Conclusions}
We introduced a special partially separable model for the harmonic expansion of the projections of a dynamic object, and formulated the recovery from time-sequential projections as a bilinear inverse problem. We analyzed uniqueness and stability of recovery using the proposed method and compared it with GMLR, a recent generative model for video reconstruction. Unlike GMLR, in its current form the proposed method does not use a spatial prior, yet in our experiments it was competitive with or better than GMLR.

Ideas for future study include  an object-adaptive view angle sampling scheme, and a generative model to introduce a spatial prior for the time-varying object to improve the accuracy of the reconstructions with reduced sampling requirements.

\bibliographystyle{unsrt}
\bibliography{TVT_bib}

\begin{thebibliography}{10}

\bibitem{10.1093/ehjci/jew044}
Ibrahim Danad, Jackie Szymonifka, Joshua Schulman-Marcus, and James~K. Min.
\newblock {Static and dynamic assessment of myocardial perfusion by computed
  tomography}.
\newblock {\em European Heart Journal - Cardiovascular Imaging},
  17(8):836--844, 03 2016.

\bibitem{scanziani2018situ}
Alessio Scanziani, Kamaljit Singh, Tom Bultreys, Branko Bijeljic, and Martin~J
  Blunt.
\newblock In situ characterization of immiscible three-phase flow at the pore
  scale for a water-wet carbonate rock.
\newblock {\em Advances in Water Resources}, 121:446--455, 2018.

\bibitem{o2011dynamic}
JPB O'Connor, PS~Tofts, KA~Miles, LM~Parkes, G~Thompson, and A~Jackson.
\newblock Dynamic contrast-enhanced imaging techniques: {CT} and {MRI}.
\newblock {\em The British journal of radiology}, 84:S112--S120, 2011.

\bibitem{maire201620}
Eric Maire, Christophe Le~Bourlot, J{\'e}r{\^o}me Adrien, Andreas Mortensen,
  and Rajmund Mokso.
\newblock 20 {H}z {X}-ray tomography during an in situ tensile test.
\newblock {\em International Journal of Fracture}, 200(1):3--12, 2016.

\bibitem{willis95}
P.N. Willis and Y.~Bresler.
\newblock Optimal scan for time-varying tomography. {I}. {T}heoretical analysis
  and fundamental limitations.
\newblock {\em IEEE Transactions on Image Processing}, 4(5):642--653, 1995.

\bibitem{willis95_2}
N.P. Willis and Y.~Bresler.
\newblock Optimal scan for time-varying tomography. {II}. {E}fficient design
  and experimental validation.
\newblock {\em IEEE Transactions on Image Processing}, 4(5):654--666, 1995.

\bibitem{ma11081395}
Clément Jailin and Stéphane Roux.
\newblock Dynamic tomographic reconstruction of deforming volumes.
\newblock {\em Materials}, 11(8), 2018.

\bibitem{robinson2003fast}
Dirk Robinson and Peyman Milanfar.
\newblock Fast local and global projection-based methods for affine motion
  estimation.
\newblock {\em Journal of Mathematical Imaging and Vision}, 18(1):35--54, 2003.

\bibitem{xiong2014linearly}
Xiong Xiong and Kaihuai Qin.
\newblock Linearly estimating all parameters of affine motion using radon
  transform.
\newblock {\em {IEEE} {T}ransactions on {I}mage {P}rocessing},
  23(10):4311--4321, 2014.

\bibitem{Haldar2010}
Justin~P. Haldar and Zhi-Pei Liang.
\newblock Spatiotemporal imaging with partially separable functions: A matrix
  recovery approach.
\newblock In {\em 2010 IEEE International Symposium on Biomedical Imaging: From
  Nano to Macro}, pages 716--719, April 2010.

\bibitem{ma2019dynamic}
Shuli Ma, Huiqian Du, Qiongzhi Wu, and Wenbo Mei.
\newblock Dynamic {MRI} reconstruction exploiting partial separability and
  t-{SVD}.
\newblock In {\em 2019 {IEEE} 7th {I}nternational {C}onference on
  {B}ioinformatics and {C}omputational {B}iology {(ICBCB)}}, pages 179--184.
  IEEE, 2019.

\bibitem{ma2021dynamic}
Shuli Ma, Youchen Fan, and Zhifei Li.
\newblock Dynamic {MRI} exploiting partial separability and shift invariant
  discrete wavelet transform.
\newblock In {\em 2021 6th International Conference on Image, Vision and
  Computing (ICIVC)}, pages 242--246. IEEE, 2021.

\bibitem{Hyder2020}
Rakib Hyder and M.~Salman Asif.
\newblock Generative models for low-dimensional video representation and
  reconstruction.
\newblock {\em {IEEE} {T}ransactions on {S}ignal {P}rocessing}, 68:1688--1701,
  2020.

\bibitem{epstein2003introduction}
C.L. Epstein.
\newblock {\em Introduction to the {M}athematics of {M}edical {I}maging}.
\newblock Pearson Education/Prentice Hall, 2003.

\bibitem{methods_math_phys}
In Michael Reed and Barry Simon, editors, {\em Methods of Modern Mathematical
  Physics}. Academic Press, 1972.

\bibitem{slyusar1999family}
VI~Slyusar.
\newblock A family of face products of matrices and its properties.
\newblock {\em Cybernetics and Systems Analysis}, 35(3):379--384, 1999.

\bibitem{li2016identifiability}
Yanjun Li, Kiryung Lee, and Yoram Bresler.
\newblock Identifiability in blind deconvolution with subspace or sparsity
  constraints.
\newblock {\em IEEE Transactions on information Theory}, 62(7):4266--4275,
  2016.

\bibitem{chan2012influence}
Rachel~W Chan, Elizabeth~A Ramsay, Edward~Y Cheung, and Donald~B Plewes.
\newblock The influence of radial undersampling schemes on compressed sensing
  reconstruction in breast {MRI}.
\newblock {\em Magnetic resonance in medicine}, 67(2):363--377, 2012.

\bibitem{golub1973differentiation}
Gene~H Golub and Victor Pereyra.
\newblock The differentiation of pseudo-inverses and nonlinear least squares
  problems whose variables separate.
\newblock {\em SIAM Journal on numerical analysis}, 10(2):413--432, 1973.

\bibitem{ulyanov2018deep}
Dmitry Ulyanov, Andrea Vedaldi, and Victor Lempitsky.
\newblock Deep image prior.
\newblock In {\em Proceedings of the IEEE conference on computer vision and
  pattern recognition}, pages 9446--9454, 2018.

\bibitem{radford2015unsupervised}
Alec Radford, Luke Metz, and Soumith Chintala.
\newblock Unsupervised representation learning with deep convolutional
  generative adversarial networks.
\newblock {\em arXiv preprint arXiv:1511.06434}, 2015.

\bibitem{HyderGithub2019}
Rakib Hyder.
\newblock {Generative Models for Low-Rank Video Representation and
  Reconstruction}.
\newblock \url{https://github.com/CSIPlab/GMLR}, 2019.

\bibitem{kingma2014adam}
Diederik~P Kingma and Jimmy Ba.
\newblock Adam: {A} method for stochastic optimization.
\newblock {\em arXiv preprint arXiv:1412.6980}, 2014.

\bibitem{lahiri2022sparse}
Anish Lahiri, Marc Klasky, Jeffrey~A Fessler, and Saiprasad Ravishankar.
\newblock Sparse-view cone beam {CT} reconstruction using data-consistent
  supervised and adversarial learning from scarce training data.
\newblock {\em arXiv preprint arXiv:2201.09318}, 2022.

\bibitem{pwise_affine}
{Piecewise Affine Transformation}.
\newblock
  \url{https://scikit-image.org/docs/dev/auto_examples/transform/plot_piecewise_affine.html}.

\bibitem{brychkov1992multidimensional}
Yu~A Brychkov.
\newblock {\em Multidimensional integral transformations}.
\newblock CRC Press, 1992.

\bibitem{pfister2020composition}
Luke Pfister, Rohit Bhargava, Yoram Bresler, and P~Scott Carney.
\newblock Composition-aware spectroscopic tomography.
\newblock {\em Inverse Problems}, 36(11):115010, 2020.

\bibitem{harikumar1998fir}
Gopal Harikumar and Yoram Bresler.
\newblock Fir perfect signal reconstruction from multiple convolutions: minimum
  deconvolver orders.
\newblock {\em IEEE Transactions on Signal Processing}, 46(1):215--218, 1998.

\bibitem{harikumar1999perfect}
Gopal Harikumar and Yoram Bresler.
\newblock Perfect blind restoration of images blurred by multiple filters:
  Theory and efficient algorithms.
\newblock {\em IEEE Transactions on Image Processing}, 8(2):202--219, 1999.

\bibitem{jiang2001almost}
Tao Jiang, Nicholas~D Sidiropoulos, and Jos~MF Ten~Berge.
\newblock Almost-sure identifiability of multidimensional harmonic retrieval.
\newblock {\em IEEE Transactions on Signal Processing}, 49(9):1849--1859, 2001.

\bibitem{710918}
V.I. Slyusar.
\newblock New operations of matrix products for application of radars.
\newblock In {\em IEEE MTT/ED/AP West Ukraine Chapter DIPED - 97. Direct and
  Inverse Problems of Electromagnetic and Acoustic Theory (IEEE Cat.
  No.97TH8343)}, pages 73--74, 1997.

\bibitem{alexeev2012full}
Boris Alexeev, Jameson Cahill, and Dustin~G Mixon.
\newblock Full spark frames.
\newblock {\em Journal of Fourier Analysis and Applications}, 18(6):1167--1194,
  2012.

\end{thebibliography}

\clearpage
\section{Appendix}
In this section, we establish some of the proofs of the theorems and derivations that are used in the paper.

\subsection{Partially-Separable Model for Affine Motion}
\label{app:par-sep_model_for_aff_motion}
Throughout this section we
consider a nominal object $f_0(\xb) \in \mathcal{L}_2$, with $\xb = (x_1, x_2) \in \Real^2$, vanishing outside a disk of radius $L$, and essentially bandlimited to spatial bandwith $B$ radians, that is
\begin{align}
    &f_0(\xb) = 0 \quad\text{for} \quad \|\xb\|_2 > L \\
    &\int_{\|\omegab\| > B} |F_0(\omegab)|^2 d \omegab \leq \epsilon_f 
\end{align}
where $F_0$ is the 2D Fourier transform of $f_0$, and $\epsilon_f \ll \|f_0\|_2^2$. We consider motions of the nominal object resulting in a time-varying object $f(\xb,t)$, which we assume remains bounded in the disk of radius $L$, that is, $f(\xb,t) = 0 \, \forall t$ for  $\|\xb\|_2 > L$.
\subsubsection{Time-Varying Translation}
\label{app:translation_par_sep}
Consider the nominal object translating in time along trajectory $\cb(t) = (c_1(t), c_2(t))$ of bounded extent, $\|\cb(t)\|_2 \leq c_{\max} \, \forall t $,
resulting in the time-varying object
\begin{align}
    f(\xb,t) &= f_0(\xb-\cb(t)) \\
    F(\omegab, t) &= F_0(\omegab)e^{-j(\omegab^T \cb(t))}. \label{eq:comp_exp_four_domain}
\end{align}
Using the assumptions and the Cauchy-Schwartz inequality, it follows that
$|\omegab^T \cb(t)| \leq Bc_{\max}$, suggesting the use of a $K$-th order Taylor series expansion of the complex exponential in \eqref{eq:comp_exp_four_domain}. To bound the remainder term, we use the integral form of the remainder in Taylor's theorem,
\begin{equation}
    e^{jx}=\sum_{k=0}^{K} \frac{(j x)^{k}}{k !}+\frac{j^{K+1}}{K !} \int_{0}^{x}(x-s)^{K} e^{j s} d s
\end{equation}
which yields
\begin{align}
\left| e^{jx} - \sum_{k=0}^{K} \frac{(j x)^{k}}{k !} \right| &\leq \frac{|x|^{K+1}}{(K+1)!} \nonumber \\ &\approx \sqrt{2 \pi (K+1)} \left(\frac{e|x|}{K+1}\right)^{K+1} , \label{eq:TaylorCompExp}
\end{align}
where the approximate form follows by Striling's approximation of the factorial.
It follows that the remainder is exponentially decaying for $K> e|x| -1$.
Applying \eqref{eq:TaylorCompExp} to the complex exponential in \eqref{eq:comp_exp_four_domain} we obtain that the remainder in its $K+1$ term expansion
is bounded by
\begin{align} \label{eq:remainTaylorCompExp}
    \left| e^{-j\omegab^T\cb(t)} - \sum_{k= 0}^{K} \frac{(-j\omegab^T\cb(t))^{k}}{k !} \right| &
    \leq \frac{|Bc_{\max}|^{K+1}}{(K+1)!} \\
    & \hspace{-1em} \approx 
    \sqrt{2 \pi (K+1)} \left(\frac{Bc_{\max}}{K+1}\right)^{K+1} .
    \nonumber
\end{align}
This remainder is exponentially decaying for $K> Bc_{\max}-~1$. Because it holds pointwise for each $\omegab$, it implies that the corresponding $K$-th order expansion for $F(\omegab,t)$ has relative rms error bounded by the same remainder bound.

The resulting expansion leads to the following approximation in the spatial domain
\begin{align} \label{eq:ParSepTrans-space}
   {f}(\xb,t) & \approx \sum_{\substack{(\alpha_1 + \alpha_2)\leq K \\ \alpha_1, \alpha_2 \geq 0}} \hspace{-1em} (-1)^{(\alpha_1 + \alpha_2)} c(t)^\alpha  \left({\frac{\Dc^\alpha}{\alpha_1!\alpha_2!} }
    f_0(\xb) \right)
\end{align}
where we use the index notation
\begin{align}
    \cb^\alpha &\triangleq c_1^{\alpha_1} c_2^{\alpha_2} \nonumber \\
    \Dc^\alpha &\triangleq \frac{\partial^{(\alpha_1+\alpha_2)}}{\partial x_1^{\alpha_1} \partial x_2^{\alpha_2}}
    \label{eq:multi_idx_notation}
\end{align}
By Parseval's identity, the relative rms truncation error of the expansion in \eqref{eq:ParSepTrans-space} is bounded by the right-hand side of \eqref{eq:remainTaylorCompExp}.
Hence, $f$ admits a partially separable expansion with $(K+1)(K+2)/2$ terms, with possibly moderate $K$. 

The number of significant terms in the expansion depends on the spatial bandwidth $B$ of $f_0$ and the magnitude $c_{\max}$ of the motion.

\label{sec:nominal_f_traj}
\subsubsection{Time-Varying Scaling}
{Consider  the nominal object $f_0(\xb)$ undergoing scaling by diagonal matrix $\Cmat(t) = \diag \left(c_1(t), c_2(t)\right)$
to $f(\xb,t) = f_0(\Cmat(t)\xb)$.}
This case too can be represented using the partially-separable model. To see this, we use the 2D Mellin transform \cite{brychkov1992multidimensional}
\begin{equation}
    \hat{f}(s,\tau) \triangleq \int_{0}^{\infty}\int_{0}^{\infty} x_1^{s-1}x_2^{\tau-1}f(\xb)dx_1dx_2
\end{equation}
which is well-defined and has an inverse if the integral
\begin{align}
\int_{0}^{\infty}\int_{0}^{\infty} |f(\xb)|x_1^{s-1}x_2^{\tau-1}dx_1 dx_2
\end{align}
converges for all $s$ s.t $\alpha_1 < \Re(s) < \beta_1$ and all $\tau$ s.t $\alpha_2 < \Re(\tau) < \beta_2$ for some $\alpha_1  < \beta_1$ and $\alpha_2  < \beta_2$.

Since $f(\xb,t)$ is defined on a centered disk, we cannot apply the Mellin transform, which is only defined for non-negative arguments, to it directly. Instead, we consider the part of $f(\xb,t)$ supported in the first quadrant, and to simplify the notation we still denote it by $f(\xb,t)$ and $f_0(\xb)$.
The extension to include the other parts is discussed later. 
For our setting, {since $f$ is supported on a disk of radius $L$ and $f \in \Lc_2$, this condition holds for $0.5 < \Re(s) < \infty$ and $0.5 < \Re(\tau) < \infty$ .}

The Mellin transform $\hat{f}$ of the object is given by
\begin{equation}
    \hat{f}(s,\tau, t) = c_{1}(t)^{-s}c_{2}(t)^{-\tau}\hat{f}_0(s,\tau)
\end{equation}

Again, using a {truncated series expansion for the exponential term, we obtain
\begin{align*}
    & \hat{f}(s,\tau,t) \approx  \sum_{k=0}^K \frac{(-1)^k}{k!}\left(s \ln c_{1}(t) +\tau \ln c_{2}(t) \right)^k \hat{f}_0(s,\tau) \\ \quad
    &=   \sum_{\substack{(\alpha_1 + \alpha_2) \leq k \\ \alpha_1, \alpha_2 \geq 0}} \hspace{-1em}  \frac{(-1)^{\alpha_1+\alpha_2} }{\alpha_1 ! \alpha_2!} \left(s \ln c_{1}(t)  \right)^{\alpha_1} \left(\tau \ln c_{2}(t) \right)^{\alpha_2} \hat{f}_0(s,\tau)
\end{align*}
}
Using the differentiation property for the 2D Mellin transform 
\begin{equation}
    \left( \xb_1\frac{d}{d\xb_1} \right)^{\alpha_1} \left( \xb_2\frac{d}{d\xb_2} \right)^{\alpha_2} f(\xb) \leftrightarrow (-1)^{\alpha_1+\alpha_2} s^{\alpha_1} \tau^{\alpha_2} \hat{f}(s,\tau),
\end{equation}
yields 
\begin{align}
    f(\xb,t) & \approx \sum_{\substack{(\alpha_1 + \alpha_2) \leq k \\ \alpha_1, \alpha_2 \geq 0}} \ln^\alpha \Cmat(t) \left( \frac{1}{\alpha_1!\alpha_2!}(\xb \Dc)^\alpha 
    f_0(\xb) \right)
    \label{eq:kth_order_spatial_approx_scaling}
\end{align}
using the notation in \eqref{eq:multi_idx_notation} with 
\begin{equation}
    \ln^\alpha\Cmat = \ln^{\alpha_1}c_1(t) \ln^{\alpha_2}c_2(t).
\end{equation}
In addition to the spatial bandwidth of $f_0$ and the log-magnitude of the motion components $c_1(t)$ and $c_2(t)$, the number of significant terms in the expansion depends also on the radius $L$ of the support of the object $f_0$. Time-varying scaling factors $c_1(t)$ and $c_2(t)$ being closer to 1, lower spatial bandwidth, and a smaller radius $L$ would result in fewer significant terms in the expansion.

 To incorporate the parts of the object in the remaining quadrants to the analysis, we  use the following steps. The object $f_0$ is decomposed into the sum of 4 pieces as $f_0(\xb) = \sum_{m,n=0}^1 f_{mn}(\xb)$ where $f_{mn}$ is supported in the $mn$-th quadrant in binary notation. Then, we  represent the versions of these decomposed parts of the object after reflection  into the first quadrant as $\tilde{f}_{mn}(\xb) =~f_{mn}((-1)^m x_1, (-1)^n x_2)$ and apply \eqref{eq:kth_order_spatial_approx_scaling} on each $\tilde{f}_{mn}$. The object is finally  reconstituted from its subparts as $f(\xb,t) =~\sum_{m,n=0}^1 \tilde{f}_{mn}((-1)^m x_1, (-1)^n x_2, t)$ by incorporating the necessary sign changes in the term $(\xb \mathcal{D})^\alpha$ as $\left( (-1)^m \xb_1\frac{d}{d\xb_1} \right)^{\alpha_1} \left( (-1)^n \xb_2\frac{d}{d\xb_2} \right)^{\alpha_2}$.

Although this partition creates discontinuities at $x_1 = 0$ and $x_2 = 0$, since the derivatives are multiplied with $x_1$ and $x_2$ at these points, these discontinuities do not constitute a problem. Furthermore, the composition of the object from the four quadrants does not increase the number of terms in the expansion \eqref{eq:kth_order_spatial_approx_scaling}, because the expansions for all four quadrants share the same temporal functions $\ln^\alpha \Cmat(t)$.
\smallskip

\subsubsection{Time-Varying Rotation}
Consider the nominal object $f_0(x)$ rotated by angle $\theta(t)$. Using the polar representation of $f(\xb,t)$
\begin{equation}
    \begin{aligned}
        f_{\mathrm{pol}}(r,\phi,t) = f_{0,\mathrm{pol}}(r,\phi - \theta(t))
    \end{aligned}
\end{equation}
 the rotation angle $\theta(t)$ acts as a translation in the angular coordinate. {We therefore adopt the same approach used for a time-varying translation in Appendix~\ref{app:translation_par_sep} 1).}
Computing the Fourier transform of  $f_{pol}$ which is $2\pi$-periodic in $\phi$,  w.r.t. $\phi$, yields
\begin{align}
    F_{pol}(r,\omega_\phi,t) = F_{0,pol}(r,\omega_\phi,t)e^{-j\omega_\phi \theta(t)}.
\end{align}
Expanding the complex exponential using a $k$-th order Taylor series as in Appendix \ref{app:translation_par_sep}, we obtain a similar bound on the remainder term {
\begin{align}
    \left| e^{j\omega_\phi \theta(t)} - 
    \sum_{k= 0}^K \frac{1}{k!}
    (j\omega_\phi\theta(t))^k \right| 
    &\leq \frac{\left| \omega_\phi \theta(t) \right|^{K+1}}{ (K+1)! } \\
    &\leq \frac{|B_\phi \theta_{\max}|^{K+1}}{(K+1)!}
\end{align}
where $|\theta(t)| \leq \theta_{\max} \,\, \forall t$ and $B_\phi = BL$} is the angular bandlimit of the object $f_{0,pol}$.

Finally, the resulting expansion in the spatial domain is given by
{
\begin{equation}
   {f}_{pol}(r,\phi,t) \approx \sum_{k=0}^K (-1)^k \theta^k(t)
   \left( \frac{1}{k!} \frac{d^k}{d\phi^k}  f_{0,pol}(r,\phi) \right).
\end{equation}

It follows that in the  case of rotation, $f_{pol}$ admits a partially separable representation with $K+1$ significant terms,  whose number depends on the bandwidth of the object, its radius $L$, and the magnitude of the rotation. Again, the bound on the truncation error decays exponentially with decreasing values of these parameters. }

\subsection{Proof of Theorem \ref{thm:proj_special_par_sep}}

Before proving the result we state the following lemma.

\begin{lemma}
\label{lem:proj_norm_bnd_lemma}(\cite{epstein2003introduction}, Proposition 6.6.1.)
Suppose that object $f \in \mathcal{L}_2(\mathbb{R}^2)$  vanishes outside a disk of radius $L$. Then, for each $\theta$, we have the estimate
\begin{equation*}
\label{eq:radon_err_bnd}
    \int_{-\infty}^{\infty} |(\mathcal{R}f)(s,\theta)|^2 ds \leq 2L||f||_2^2.
\end{equation*}
\end{lemma}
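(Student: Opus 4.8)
The plan is to prove the bound directly from the definition of the Radon transform, using the compact support of $f$ to control the length of each line of integration, followed by a single application of the Cauchy--Schwarz inequality. (One could simply invoke \cite{epstein2003introduction}, but the argument is short and self-contained.) First I would fix $\theta$ and introduce rotated coordinates $(s,u)$ adapted to the angle $\theta$, in which the line of integration at offset $s$ is parametrized by $u$, so that $(\mathcal{R}f)(s,\theta) = \int_{\Real} f_\theta(s,u)\,du$, where $f_\theta$ denotes $f$ expressed in these coordinates. Since the rotation has unit Jacobian, the change of variables preserves the $L^2$ norm, $\|f_\theta\|_2 = \|f\|_2$.

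The key geometric observation is that for fixed $s$ the integrand $f_\theta(s,\cdot)$ is supported on the intersection of the line with the disk of radius $L$. For $|s| > L$ this intersection is empty, so $(\mathcal{R}f)(s,\theta) = 0$; for $|s| \le L$ it is a chord of length $2\sqrt{L^2 - s^2} \le 2L$. Applying Cauchy--Schwarz to the $u$-integral over this chord, whose measure is at most $2L$, yields for each $s$
\begin{equation*}
    |(\mathcal{R}f)(s,\theta)|^2 \le 2L \int_{\Real} |f_\theta(s,u)|^2\,du .
\end{equation*}
I would then integrate this inequality over $s \in \Real$ and invoke Fubini's theorem together with the norm-preserving property of the coordinate change to obtain
\begin{equation*}
    \int_{\Real} |(\mathcal{R}f)(s,\theta)|^2\,ds \le 2L \int_{\Real}\!\int_{\Real} |f_\theta(s,u)|^2\,du\,ds = 2L\,\|f\|_2^2 ,
\end{equation*}
which is exactly the claimed estimate.

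There is no substantial obstacle here; the proof is essentially bookkeeping. The only point requiring care is the step that replaces the $s$-dependent chord length $2\sqrt{L^2-s^2}$ by the uniform bound $2L$: this must be done \emph{before} integrating in $s$, so that the constant factors out and the remaining double integral reassembles exactly into $\|f\|_2^2$. (Retaining the sharper $s$-dependent factor would improve the constant but would no longer collapse into a clean bound expressed through $\|f\|_2^2$.) Measurability and finiteness of the integrals, needed to apply Fubini, follow from $f \in \Lc_2$ together with the compact support, which confines all integrations to a bounded region.
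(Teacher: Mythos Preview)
Your proof is correct. Note, however, that the paper does not supply its own proof of this lemma: it is quoted as Proposition~6.6.1 of \cite{epstein2003introduction} and used as a black box in the proof of Theorem~\ref{thm:proj_special_par_sep}. The argument you give---rotated coordinates, chord length bounded by $2L$, Cauchy--Schwarz on the $u$-integral, then Fubini---is precisely the standard derivation (and indeed the one in the cited reference), so there is nothing to compare.
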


Denote the first term in \eqref{eq:f_parsep_w_err} by $\tilde{f}(\xb,t)$. Then its Radon transform  is given by
\begin{equation}
\label{eq:radon_obj_term}
    (\mathcal{R}\tilde{f})(s, \theta,t) =
    \sum_{k=0}^{K-1} q_k(s,\theta) \psi_k(t)
\end{equation}
where $q_k \triangleq \mathcal{R}f_k$. Next, expanding $q_k$ in a Fourier series yields 
\begin{equation}
\label{eq:q_fourier_series}
    q_k(s,\theta) = \sum_n \beta_{n,k}(s) e^{jn\theta}.
\end{equation}
Substituting \eqref{eq:q_fourier_series} into \eqref{eq:radon_obj_term} 
and switching the order of summation yields 
    \begin{align}
        (\mathcal{R}\tilde{f})(s, \theta,t) =\sum_n h_{n}(s,t) e^{jn\theta} \label{eq:special_par_sep_proj}
    \end{align}
    where $h_{n}(s,t)$ is given by \eqref{eq:par-sep-special}. 

Next, we define the Radon transform of the second term in \eqref{eq:f_parsep_w_err} as $\gamma_g(s,\theta,t) = (\mathcal{R}\gamma_f )(s,\theta,t)$. Then, applying Lemma \ref{lem:proj_norm_bnd_lemma} to $\gamma_f(.,t)$ for fixed $t$
yields
\begin{equation*}
    \int_{-\infty}^{\infty} |\gamma_g(s,\theta,t)|^2 ds \leq 2L \int_{-\infty}^{\infty} |\gamma_f(\xb,t)|^2 d \xb
\end{equation*}
 and integrating both sides over $\theta$ and $t$
\begin{equation}
\begin{aligned}
\label{eq:err_bnd_proj}
    ||\gamma_g||_2^2 \leq 2 \pi L ||\gamma_f||_2^2 
    \leq 2 \pi L \epsilon_f.
\end{aligned}
\end{equation}
    
Combining \eqref{eq:special_par_sep_proj} and \eqref{eq:err_bnd_proj} shows that the projections admit the representation \eqref{eq:proj_par_sep_rep_w_err} with an error term bounded as $||\gamma_g||_2^2 \leq 2\pi L\epsilon_f$ and $h_n(s,t)$ represented by the \emph{special} partially separable model \eqref{eq:par-sep-special}. \qedsymbol

\subsection{Derivation of $\bold{L}_2(\boldsymbol{\beta})$}
\label{app:C}
To formulate the model linear in $Z$ of Sec.~\ref{sec:bilinear_prob_and_nec_cond} we manipulate the expression for the $i$-th element of $\hat{g}(s)$, 
\begin{equation}
    \begin{aligned}
    \label{eq:g_s_i_th_element}
        \hat{g}(s)_i = \hat{g}(s)_i^T &= {\beta}(s)^T L_{1_{i:}}^T(Z) \\
        &= {\beta}(s)^T (A_i^T \otimes \widehat{U}_{i:})\bold{z} \\
        &= {\beta}(s)^T A_i^T ( I_{K+1} \otimes \widehat{U}_{i:} ) \zb.
    \end{aligned}
\end{equation}
where $A_i$ is defined in \eqref{eq:A_i_def}.

Define  $\hat{g}^{(i)} \in \mathbb{R}^J$ by 
$\hat{g}_s^{(i)} \triangleq \hat{g}(s)_i$, that is, $\hat{g}^{(i)}$ is a vector containing the $i$-th elements of all $\hat{g}(s)$. This vector can be obtained as
\begin{equation}
\label{eq:L2form}
    \hat{g}^{(i)} = L_2({\beta})^{(i)} \zb \quad i=1,\ldots,P
\end{equation}
where matrix $\boldsymbol{\beta} \in \mathbb{R}^{(2N+1)(K+1) \times J}$ contains ${\beta}(s_j), \,\, j\in\{1, \ldots,J\}$ as columns and $L_2({\beta})^{(i)} \in \mathbb{R}^{J \times d(K+1)}$ is the stacking in $s$ of row vectors 
\begin{equation}
    L_2({\boldsymbol{\beta}(s)})^{(i)} \triangleq {\beta}(s)^T A_i^T ( I_{K+1} \otimes \widehat{U}_{i:} ).
\end{equation}

Then, stacking the vectors $\hat{g}^{(i)}$ and operators $L_2({\beta})^{(i)}$ for $i \in \{1, \ldots, 2P\}$, we obtain the problem \eqref{eq:Linz} linear in $Z$
where $\hat{g}$ and  $\bold{L}_2(\boldsymbol{\beta})$ are defined as
\begin{equation}
    \begin{aligned}
        \hat{\gb} = \begin{bmatrix}
        \hat{g}^{(1)} \\
        \vdots \\
        \hat{g}^{(2P)}
         \end{bmatrix} 
        \quad \bold{L}_2(\boldsymbol{\beta}) = \begin{bmatrix}
        L_2({{\beta}(s_1)})^{(1)} \\
        \vdots \\
        L_2({{\beta}(s_J)})^{(2P)}
        \end{bmatrix}.
        \label{eq:g_and_L2_stacked_def}
    \end{aligned}
\end{equation}

\subsection{Proof of Theorem \ref{thm:L1_full_rank}}

We first state and prove some preliminary results.  We use the property of Kruskal rank of  matrix $M$, or $\krank(M)$, defined as the maximal number $k$ such that any subset of $k$ columns of $M$ are linearly independent.

\begin{theorem}
\label{thm:L1_full_rank_V_krank}
Suppose $V \in \mathbb{C}^{2P\times n}$ and $V^T$ has Kruskal rank $\mathrm{krank}(V^T) = n$, and $\Psi \in \Real^{P \times L}$ has elements independently and identically distributed as $N(0,1)$. Let matrix $\widehat{\Psi} \triangleq [\Psi^T,  \Psi^T]^T$ have columns $\widehat{\psi_i}$.
Then, if $2P \geq nL$, the matrix $M=\begin{bmatrix} \mathrm{diag}(\widehat{\psi}_1)V & \ldots & \mathrm{diag}(\widehat{\psi}_{L})V \end{bmatrix}$ has  full column rank w.p.1.
\end{theorem}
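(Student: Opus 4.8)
The plan is to prove full column rank by the standard \emph{generic-rank} argument. First I would observe that every entry of $M$ is a single Gaussian entry of $\Psi$ multiplied by a fixed entry of $V$, so every $nL\times nL$ minor of $M$ is a polynomial in the $PL$ real variables $\{\Psi_{ri}\}$. The matrix $M$ fails to have full column rank exactly when all these maximal minors vanish. Since the zero set of a polynomial that is not identically zero has Lebesgue measure zero, and the $N(0,1)$ law is absolutely continuous, it suffices to exhibit a \emph{single} real matrix $\Psi^{*}$ (depending on the given $V$) for which some $nL\times nL$ submatrix of $M(\Psi^{*})$ is nonsingular. That one minor is then a nonzero polynomial, and $M$ has full column rank with probability one.

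The existence of such a $\Psi^{*}$ is where the hypotheses $\krank(V^T)=n$ and $2P\ge nL$ enter. I would use that row $r$ of $M$ equals $\widehat{\Psi}_{r:}\otimes V_{r:}$, so the row space of $M$ is $\sum_{r=1}^{P}\psi_r\otimes W_r$ with $\psi_r=\Psi_{r:}\in\Real^{L}$ and $W_r=\mathrm{span}\{V_{r:},V_{(r+P):}\}\subseteq\Complex^{n}$; note $\dim W_r=2$ because $\krank(V^T)=n\ge 2$ forces any two rows of $V$ to be independent. The goal is to choose the $\psi_r$ so that this sum is all of $\Complex^{nL}$. The basic device is to send, for each $r$, both $\pi$-paired rows into one ``copy'' $\ell(r)$ by setting $\psi_r=e_{\ell(r)}$; a copy that receives $n$ distinct rows of $V$ is then filled, because by $\krank(V^T)=n$ any $n$ rows of $V$ are linearly independent, and $M^{*}$ becomes block diagonal with nonsingular diagonal blocks.

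The main obstacle is the coupling created by the doubling $\widehat{\Psi}=[\Psi^T,\Psi^T]^T$: rows $r$ and $r+P$ share the same $\psi_r$, so a $\pi$-pair cannot be split between two different copies under the basis-vector routing. For even $n$ each copy can be filled by $n/2$ whole pairs and the count $2P\ge nL$ is exactly enough. For odd $n$---the case $n=2N+1$ relevant to Theorem~\ref{thm:L1_full_rank}---pairing alone would waste one row per copy and require the stronger $P\ge L\lceil n/2\rceil$. I would remove this deficit by using a few \emph{bridge} pairs with $2$-sparse weights $\psi_r=a\,e_{\ell}+b\,e_{\ell+1}$ shared between consecutive copies, so that the paired rows contribute the missing direction to copy $\ell$ and to copy $\ell+1$ simultaneously; this yields a block-bidiagonal $M^{*}$. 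The crux is then to verify that this coupled matrix is nonsingular---i.e., that each bridge pair supplies a genuinely new direction to both of its copies and creates no spurious dependency---which is precisely where the full strength of $\krank(V^T)=n$ (general position of every $n$-subset of the rows of $V$) is used, and a short bookkeeping check shows the number of private and bridge pairs needed equals $P$ exactly under $2P\ge nL$ for both parities of $n$.
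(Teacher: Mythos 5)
Your reduction of the problem to exhibiting a single witness $\Psi^{*}$ --- every maximal minor of $M$ is a polynomial in the entries of $\Psi$, so $M$ is either generically full rank or never full rank --- is exactly the paper's argument, and your even-$n$ construction (route each $\pi$-pair whole to one copy via indicator weights, $n/2$ pairs per copy, permute to block-diagonal form, and invoke $\krank(V^T)=n$ on each $n\times n$ block) coincides with the paper's. You have also correctly isolated the real difficulty: the doubling $\widehat{\Psi}=[\Psi^T,\Psi^T]^T$ forces rows $r$ and $r+P$ to carry the same weight vector $\Psi_{r:}$, so a pair cannot be split between two copies by indicator weights. (The paper's own odd-$n$ device of two complementary partitions applied separately to the top and bottom halves runs into precisely this obstruction: when row $r$ lies in $J_{k_1}^{(1)}$ and $J_{k_2}^{(2)}$ with $k_1\neq k_2$ it prescribes $\Psi_{r,l}=\delta[k_1-l]$ and $\Psi_{r,l}=\delta[k_2-l]$ simultaneously, which is inconsistent.)

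The gap is the claim that the bridge blocks are nonsingular ``by the full strength of $\krank(V^T)=n$.'' Unwinding your block-bidiagonal structure, the $2n\times 2n$ block assembled from the private rows $A,B\in\mathbb{C}^{(n-1)\times n}$ of two adjacent copies and a bridge pair $C\in\mathbb{C}^{2\times n}$ is nonsingular iff $Cu_1$ and $Cu_2$ are linearly independent in $\mathbb{C}^2$, where $u_1,u_2$ span the one-dimensional kernels of $A,B$. Kruskal rank gives $Cu_i\neq 0$ and $u_1\not\parallel u_2$, but it does not give $\mathrm{span}(u_1,u_2)\cap\ker C=\{0\}$, and no re-routing of pairs can rescue this, because the statement itself fails for odd $n$ under the stated hypotheses. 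Take $n=3$, $L=2$, $P=3$, and $V_{i:}=(1,z_i,z_i^2)$ with the six distinct nodes $(z_1,\dots,z_6)=(0,2,4,1,3,9/5)$, so that $\krank(V^T)=3$. A functional annihilating all rows of $M$ is a matrix $T\in\mathbb{C}^{3\times 2}$ with $V_{r:}T\Psi_{r:}^T=V_{(r+3):}T\Psi_{r:}^T=0$ for $r=1,2,3$, i.e.\ $T\Psi_{r:}^T\in\mathbb{C}u_r$ where $u_r$ is the coefficient vector of $q_r(z)=(z-z_r)(z-z_{r+3})$. The nodes above were chosen so that $q_3=-\tfrac{1}{5}q_1+\tfrac{6}{5}q_2$, hence $u_1,u_2,u_3$ span only a plane, and a nonzero such $T$ exists whenever $\Psi_{1:},\Psi_{2:}$ are independent and $\Psi_{3:}$ has both coordinates nonzero in that basis. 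Thus $\det M$ vanishes on a set of positive measure, hence identically, and $M$ is singular for \emph{every} $\Psi$. So for odd $n$ the conclusion requires more than $\krank(V^T)=n$ --- the paper's proof shares this defect --- and any repair must either exploit the special structure of the $V$ arising from $\widehat{\Theta}$ in Theorem~\ref{thm:L1_full_rank} or add a hypothesis excluding degenerate configurations of the annihilators of the pair-spans $\mathrm{span}\{V_{r:},V_{(r+P):}\}$.
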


\begin{proof}
    Throughout the proof we assume $2P=nL$, so that $M$ is a square $nL \times nL$ matrix, and note that because adding rows to a matrix does not decrease column rank, the obtained results hold for $2P\geq nL$. We also assume that $n$ is even, and consider the odd $n$ case at the end. 
    
    We denote the set of integers $[1,2,\ldots,N]$ by $[N]$. To show the desired result, we follow an approach similar to the proof of Theorem 5.2 in \cite{pfister2020composition}.
    
    Partition $[P]$ into  $L$ disjoint sets {$\{J_l \subset~[P] \}_{l=1}^{L}$
     of equal size $|J_l| = n/2$}, and divide the matrix $V$ into two parts $V^T = [V^{(1)T}, \,\, V^{(2)T}]^T$ where $V^{(1)}, V^{(2)} \in \mathbb{C}^{P \times n}$.
     Denote the matrices containing the rows of $V^{(1)}$ and $V^{(2)}$ indexed by elements of {$J_l$ by $V^{(1)}_{J_l,:}, V^{(2)}_{J_l,:} \in \Complex^{(n/2) \times n}$.
     Then the condition $\krank(V^T) = n$ yields 
    \begin{equation}
    \rank\left(\begin{bmatrix}V^{(1)}_{J_l,:} \\ V^{(2)}_{J_l,:}\end{bmatrix}\right) = \rank\left(\begin{bmatrix}V^{(1)}_{J_l,:} \\ V^{(2)}_{J_l,:}\end{bmatrix}^T\right) =  n.
    \end{equation}
}
        Now, $\mathrm{rank}(M) = nL \iff \det M \neq 0$. Since $\det M$ is a multivariate polynomial in the $PL= nL^2/2$ entries  of ${\Psi}$ with  coefficients dependent only on the entries of $V$, it is either identically zero or its zero set is an affine algebraic set and thus a nowhere dense set of measure zero in $\Complex^{nL^2/2}$. Thus, it suffices to show $\det M \neq 0$ for a single choice of $\Psi$ \cite{harikumar1998fir, harikumar1999perfect, jiang2001almost}, demonstrating that the polynomial $\det M$ does not vanish identically.

    Permuting the rows of $M$ produces the following matrix
    \begin{align}
    \label{eq:M_row_permuted}
        \begin{bmatrix}
            D^{(1)}_{J_1}V^{(1)}_{J_1,:} & \ldots & D^{(L)}_{J_1}V^{(1)}_{J_1,:} \\
            D^{(1)}_{J_1}V^{(2)}_{J_1,:} & \ldots & D^{(L)}_{J_1}V^{(2)}_{J_1,:} \\
            \vdots & \ddots & \vdots \\
            D^{(1)}_{J_{L}}V^{(1)}_{J_{L},:} & \ldots & D^{(L)}_{J_{L}}V^{(1)}_{J_{L},:} \\
            D^{(1)}_{J_{L}}V^{(2)}_{J_{L},:} & \ldots & D^{(L)}_{J_{L}}V^{(2)}_{J_{L},:}
        \end{bmatrix}
    \end{align} 
    where {$D^{(l)}_{J_k} \triangleq \diag({\Psi}_{J_k,l})$. Setting ${\Psi}_{J_l,l} = \mathbf{1}_{n/2}$ and thus $D^{(l)}_{J_l} = I_{n/2}$, and ${\Psi}_{J_k,l} = \mathbf{0}_{n/2}$ and thus $D^{(l)}_{J_k} = 0$ for $l\neq k$} yields
    \begin{align} \label{eq:block-diag-V}
        \begin{bmatrix}
            V^{(1)}_{J_1,:} & \ldots & 0\\
            V^{(2)}_{J_1,:} & \ldots & 0\\
            \vdots & \ddots & \vdots \\
            0 & \ldots & V^{(1)}_{J_{L},:}\\
            0 & \ldots & V^{(2)}_{J_{L},:}
        \end{bmatrix}
    \end{align}
    which is a block diagonal matrix with each $n \times n$ block along the diagonal being full rank by the assumption $\krank(V^T) = n$. Thus, $M$ is full rank for this choice of ${\Psi}$, and hence has full rank for almost all $\Psi$ (i.e., generically)  and w.p. 1 for the random ${\Psi}$.

    For the odd $n$ case, if $L$ is even, we choose $2P = nL$. Then, consider two different complementary partitions of $[P]$ into $L$ subsets, $[P] = \bigcup_{k=1}^{L} J_k^{(i)}, i=1,2$ such that for each $k$, $|J_k^{(1)}| + |J_k^{(2)}| = n$.
    We apply the $i=1$ partition to the top $P$ rows of matrix $M$ and to $V^{(1)}$ and the $i=2$ partition to the bottom $P$ rows of matrix $M$ and to $V^{(2)}$. Repeating the previous argument involving permutation of the rows of $M$, the resulting matrix analogous to \eqref{eq:M_row_permuted} will have $k-l$ block given by
    \begin{equation}
        \begin{bmatrix}
            D^{(l)}_{J_k^{(1)}}V^{(1)}_{J_k,:}  \\[2ex]
            D^{(l)}_{J_k^{(2)}}V^{(2)}_{J_k,:} 
            \end{bmatrix} = \mathrm{block.diag} \left(D^{(l)}_{J_k^{(1)}}, D^{(l)}_{J_k^{(2)}} \right)
            \begin{bmatrix}
            V^{(1)}_{J_k,:}  \\ V^{(2)}_{J_k,:}
             \end{bmatrix}
    \end{equation}
    where $D^{(l)}_{J_k^{(i)}} \triangleq \diag({\Psi}_{J_k^{(i)},l})$. Setting 
    $[{\Psi}^T_{J_k^{(1)},l}, {\Psi}^T_{J_k^{(2)},l} ]^T= \delta[k-l]\mathbf{1}_{n}$
    and thus $\mathrm{block.diag} (D^{(l)}_{J_k^{(1)}}, D^{(l)}_{J_k^{(2)}} ) = \delta[k-l]I_n$, yields again a full-rank block diagonal matrix for the permuted $M$ as in \eqref{eq:block-diag-V}, establishing the result for $n$ odd and $L$ even.
    
    When both $n$ and $L$ are odd, we choose $2P = nL+1$  (which is the smallest integer value satisfying $ 2P \geq nL$), 
    and the additional row is discarded before partitioning $[P]$ and $[P-1]$ into $L$ sets for $V^{(1)}$ and $V^{(2)}$ as in the even $nL$ case since the extra row does not affect the full rankness of  matrix $M$. 
\end{proof}

\begin{corollary}
\label{cor:gauss_to_stiefel}
Theorem~\ref{thm:L1_full_rank_V_krank} also holds if instead of being a random normal matrix, $\Psi$ is drawn at random from an absolutely continuous probability distribution on the Stiefel manifold $V_{(L)}\left(\mathbb{R}^{P}\right)$.
\end{corollary}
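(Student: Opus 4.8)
The plan is to reduce the Stiefel statement to the already-established Gaussian result of Theorem~\ref{thm:L1_full_rank_V_krank} by combining an invariance property of $\det M$ under right-multiplication of $\Psi$ with the identity theorem for real-analytic functions. As in that proof I would work in the square setting $2P=nL$, where $M$ has full column rank iff $\det M\neq 0$, and note that adjoining rows cannot decrease rank. The obstacle, relative to the Gaussian case, is that Theorem~\ref{thm:L1_full_rank_V_krank} only shows that the zero set of the polynomial $\det M$ is Lebesgue-null in the ambient space $\Real^{P\times L}$; this says nothing directly about its intersection with the lower-dimensional manifold $V_{(L)}(\Real^P)$, which is itself a null set in $\Real^{P\times L}$. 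So the crux is to prove that $\det M$ does not vanish \emph{identically} on $V_{(L)}(\Real^P)$, after which analyticity finishes the argument.

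First I would record the key invariance. Writing $M$ as the block-column concatenation $[\diag(\widehat\psi_1)V,\dots,\diag(\widehat\psi_L)V]$, replacing $\Psi$ by $\Psi R$ for invertible $R\in\Real^{L\times L}$ sends each column $\widehat\psi_i$ to $\sum_j R_{ji}\widehat\psi_j$, so that $M\mapsto M(R\otimes I_n)$ and hence $\det M(\Psi R)=(\det R)^{n}\det M(\Psi)$, using $\det(R\otimes I_n)=(\det R)^{n}$.

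Second, I would exhibit one Stiefel point where the determinant is nonzero. By Theorem~\ref{thm:L1_full_rank_V_krank}, the set $\{\Psi:\det M(\Psi)\neq 0\}$ is the complement of the zero set of a nonzero polynomial and is therefore open and dense in $\Real^{P\times L}$; intersecting it with the open dense set of full-column-rank matrices yields a full-rank $\Psi_0$ with $\det M(\Psi_0)\neq 0$. Taking a QR (or polar) factorization $\Psi_0=\Psi_1 R$ with $\Psi_1\in V_{(L)}(\Real^P)$ and $R$ invertible, the invariance gives $\det M(\Psi_1)=\det M(\Psi_0)/(\det R)^{n}\neq 0$.

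Finally, for the conclusion I would pass to $\Psi\mapsto|\det M(\Psi)|^2$, a real polynomial and hence real-analytic on the connected real-analytic manifold $V_{(L)}(\Real^P)$ (for $L<P$; the case $L=P$ is handled component-by-component on $O(P)$, moving between the two components by a column sign flip, a special case of the invariance above). Since it is nonzero at $\Psi_1$, it is not identically zero, so by the real-analytic identity theorem its zero set has measure zero with respect to the natural measure on the manifold, and therefore probability zero under any absolutely continuous distribution on $V_{(L)}(\Real^P)$. Thus $M$ has full column rank w.p.~1, as claimed. The hard part is precisely this last transition: one must replace the polynomial/Lebesgue argument valid in $\Real^{P\times L}$ by the identity theorem on the manifold, and the transformation law of $\det M$ is what supplies the single nonvanishing point on $V_{(L)}(\Real^P)$ needed to invoke it.
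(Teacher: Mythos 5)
Your proof is correct, and its algebraic core coincides with the paper's: both arguments rest on the observation that right-multiplying $\Psi$ by an invertible $R\in\Real^{L\times L}$ sends $M(\Psi)$ to $M(\Psi)(R\otimes I_n)$, so that the rank of $M$ depends on $\Psi$ only through its column space; the paper realizes this map with the SVD factor $\Xi\Sigma^{-1}$ (its matrix $S$, sandwiched between two column permutations), where you use a QR factor. Where you genuinely diverge is the measure-theoretic endgame. The paper stays probabilistic: for an iid Gaussian $\Psi$ the left singular factor $U$ is Haar-distributed on $V_{(L)}\left(\Real^{P}\right)$, so the almost-sure full-rank statement for Gaussian $\Psi$ pushes forward, through the rank-preserving transformation, to an almost-sure statement for Haar-distributed $U$, and absolute continuity finishes. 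You instead exhibit a single Stiefel point $\Psi_1$ with $\det M(\Psi_1)\neq 0$ and invoke the identity theorem for real-analytic functions on the connected manifold to conclude that the zero set of $|\det M|^2$ is null there. Both routes are sound. The paper's is shorter and needs only the classical fact about singular vectors of Gaussian matrices; yours is more self-contained and, to its credit, makes explicit the gap that the corollary actually has to bridge --- namely that a Lebesgue-null set in $\Real^{P\times L}$ says nothing a priori about its trace on a lower-dimensional submanifold. A small simplification available to you: the witness $\Psi$ constructed in the proof of Theorem~\ref{thm:L1_full_rank_V_krank} has disjointly supported indicator columns, hence orthogonal columns of equal norm, so after normalization it already lies on $V_{(L)}\left(\Real^{P}\right)$ and can serve directly as $\Psi_1$, making the open-dense intersection and QR detour unnecessary.
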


\begin{proof}
There exists a column permutation matrix $\Pi_M$ such that
\begin{equation}
     M \Pi_M = \begin{bmatrix} \mathrm{diag}(v_1)\widehat{\Psi} & \ldots & \mathrm{diag}({v}_{2N+1})\widehat{\Psi} \end{bmatrix} .
\end{equation}
Consider the SVD  $\Psi = U \Sigma \Xi^*$, and let { $S \in \Real^{L \times L}$ be the block-diagonal matrix with $\Xi\Sigma^{-1}$ repeated $n$ times on its diagonal, i.e.}
\begin{equation}
\label{eq:matrix_S}
    S \triangleq \mathrm{block.diag} (\Xi\Sigma^{-1}, \ldots, \Xi\Sigma^{-1}).
\end{equation}
Then, 
\begin{equation}
    M \Pi_M S = \begin{bmatrix} \mathrm{diag}(v_1)\begin{bmatrix}U \\ U\end{bmatrix} & \ldots & \mathrm{diag}({v}_{2N+1})\begin{bmatrix}U \\ U\end{bmatrix} \label{eq:gauss_to_stiefel_1} \end{bmatrix} .
\end{equation}
{Applying another column permutation matrix $\Pi_S$ yields
\begin{align}
   M \Pi_M S\Pi_S  &= \begin{bmatrix} \mathrm{diag}(\widehat{u}_1)V & \ldots & \mathrm{diag}(\widehat{u}_{K+1})V \end{bmatrix}, \label{eq:gauss_to_stiefel_2} 
\end{align}
where $\widehat{u_l} \triangleq [u_l^T, u_l^T]^T$.
Now, because $S$ is an invertible matrix, and column permutations do not change the rank, we have
\begin{equation}
\label{eq:rank_eqv_corollary}
    \rank ( M \Pi_M S\Pi_S ) = \rank (M \Pi_M S) = \rank (M).
\end{equation}
}

By Theorem \ref{thm:L1_full_rank_V_krank} $M$ has full column rank w.p. 1 when $\Psi$ has elements i.i.d. distributed as $N(0,1)$. Now, for the SVD of $\Psi = U\Sigma\Xi^*$, it is well-known that the left singular vectors $U$ are distributed uniformly on the Stiefel manifold, {because the distribution of an iid Gaussian matrix is invariant to rotations (on both left and right)}. Then, using the second identity in \eqref{eq:rank_eqv_corollary},  matrix $M\Pi_M S$ has also full rank w.p. 1 for $U$ drawn uniformly at random from the Stiefel manifold $V_{(L)}\left( \mathbb{R}^{P} \right)$. Thus, selecting $\Psi = U$, and using the first identity in \eqref{eq:rank_eqv_corollary}  establishes the  corollary for the uniform distribution on the Stiefel manifold. However, this implies the same for any distribution that is absolutely continuous with respect to the latter.
\end{proof}

Now, having stated the preliminary results of Theorem \ref{thm:L1_full_rank_V_krank} and Corollary \ref{cor:gauss_to_stiefel}, we can prove the Theorem \ref{thm:L1_full_rank}.

Recall that $L_1 = \widehat{\Theta} \bullet \widehat{\Psi}$, where $\widehat{\Theta} \in \Complex^{2P \times(2N+1)}$, and $\widehat{\Psi} \in \Real^{2P \times (K+1)}$. Hence $L_1 \in \Complex^{2P \times (2N+1)(K+1)}$, and the stated condition 
$2P\geq~(2N+~1)(K+1)$ is necessary for $L_1$ to have full column rank.
Because $\rank(L_1L_1^T) = \rank(L_1)$, we consider $L_1L_1^T$, which is more convenient to analyze.
\begin{equation}
\begin{aligned}
    L_1 L_1^T &= (\widehat{\Theta} \bullet \widehat{\Psi})(\widehat{\Theta} \bullet \widehat{\Psi})^T \\
    &= (\widehat{\Theta} \bullet \widehat{\Psi})(\widehat{\Theta}^T \star \widehat{\Psi}^T)
\end{aligned}
\end{equation}
Using the mixed product property $(A \bullet B)(C \star D) = (AC) \odot (BD)$ \cite{710918}, 
\begin{equation}
    \begin{aligned}
        L_1 L_1^T &= \widehat{\Theta}\widehat{\Theta}^T \odot \widehat{\Psi}\widehat{\Psi}^T \\
        &= R \odot \tilde{R}
    \end{aligned}
\end{equation}
where $R = \widehat{\Theta}\widehat{\Theta}^T$ and $\tilde{R} = \widehat{\Psi}\widehat{\Psi}^T$. 

Thanks to the assumption that the $P$ view angles $\theta_i \in [0, \pi], i=1, \ldots, P$ are distinct, it follows that the $2P$ view angles $\theta_i, \theta_i+\pi, i=1, \ldots, P$ are all distinct modulo $2\pi$, and thus the $2P$ exponentials  $e^{j\theta_i}, e^{j(\theta_i+\pi)}, i=1, \ldots, P $ defining the rows of $\widehat{\Theta}^T = [\Theta^T, \,\, \bar{\Theta}^T]$ are all distinct. Now, up to scaling by a full-rank diagonal matrix, matrix $\widehat{\Theta}^T$ is a Vandermonde matrix with distinct bases, and therefore has full Kruskal rank \cite{alexeev2012full}.
Thus, $\mathrm{krank}(\widehat{\Theta}) = 2N+1$.

This implies $\widehat{\Theta}$ has full column rank, hence $\rank(R) = \rank(\widehat{\Theta})= 2N+1$.
Let the eigendecomposition of $R \succcurlyeq 0$ be
\begin{equation}
    \begin{aligned}
        R &= V \Lambda V^T = \sum_{k=1}^{2N+1} \lambda_kv_kv_k^T
    \end{aligned}
\end{equation}
Note that under the assumption on $\Psi$ in Theorem~\ref{thm:L1_full_rank} we have
$ (1/\sqrt{2})\widehat{\Psi}^T (1/\sqrt{2})\widehat{\Psi} = I$.  It therefore follows that the eigendecomposition of $\tilde{R}\succcurlyeq 0$ is
\begin{equation}
    \tilde{R} =  \widehat{\Psi} (0.5I) \widehat{\Psi}^T
\end{equation}
Then, using the eigendecomposition property of the Hadamard product yields
\begin{equation}
    \begin{aligned}
 L_1 L_1^T =       R \odot \tilde{R} = 0.5\sum_{k=1}^{2N+1} \sum_{l=1}^{K+1} \lambda_k 
        (v_k \odot \widehat{\psi}_l)(v_k \odot \widehat{\psi}_l)^T
    \end{aligned}
\end{equation}
which implies that $\rank (L_1)$ is equal to the number of $v_k \odot \widehat{\psi}_l$ that are linearly independent, upper bounded by $(2N+1)(K+1) $. Therefore, we have the following

\begin{equation} \label{eq:rankL1}
    \begin{aligned} 
  \rank (L_1)      &= 
        \mathrm{rank} \left[ \mathrm{diag}(\widehat{\psi}_1)V 
        \ldots 
        \mathrm{diag}(\widehat{\psi}_{K+1})V \right] \\
        &\leq (2N + 1)(K + 1),
    \end{aligned}
\end{equation}
where the matrix on the right hand side has dimensions  ${2P \times (2N + 1)(K + 1)}$. 

To apply Corollary~\ref{cor:gauss_to_stiefel} to \eqref{eq:rankL1}, we require $\krank(V^T)$. Recall that $V$ contains the left singular vectors of $\widehat{\Theta}$. Consider the SVD $\widehat{\Theta}^T = U\Sigma V^T$.  Because $U$ and $\Sigma$ are invertible, we have $\krank(V^T)= \krank (\widehat{\Theta}^T) = 2N+1$. Then, since $\Psi$ is drawn randomly from the Stiefel manifold $V_{(K+1)}\left( \mathbb{R}^{P} \right)$ we may apply 
Corollary~\ref{cor:gauss_to_stiefel} with $n=2N+1$ and $L=K+1$ to \eqref{eq:rankL1}, to conclude that $\rank(L_1) = (2N+1)(K+1)$ is satisfied w.p. 1, completing the proof of Theorem \ref{thm:L1_full_rank}.
\qedsymbol

\subsection{Proof of Theorem~\ref{thm:L2_cond_num}}
To study the rank and condition number of $\bold{L}_2(\boldsymbol{\beta})$  we consider the Gram matrix $\mathbf{Y} \triangleq \bold{L}_2(\boldsymbol{\beta})^T\bold{L}_2(\boldsymbol{\beta})$ and then use the facts that $\mathrm{rank}(A) = \mathrm{rank}(A^TA), \,\, \forall A \in \mathbb{C}^{m \times n}$, and $\kappa(A^TA) =\kappa^2(A)$.

Using \ref{eq:L_2_ith_row_def}, we can write $ L_2(\boldsymbol{\beta})^{(i)T}L_2(\boldsymbol{\beta})^{(i)}$ as

\begin{equation}
    L_2(\boldsymbol{\beta})^{(i)T}L_2(\boldsymbol{\beta})^{(i)}
    = J\tilde{U}_i^TA_i \Gamma A_i^T\tilde{U}_i.
\end{equation}
where $\tilde{U}_i = I_{K+1} \otimes \widehat{U}_{i:}$ and $A_i$ is defined in \eqref{eq:A_i_def}.

Let $\lambda_1 \geq \ldots \geq \lambda_{d(K+1)}$ be the eigenvalues of $\Gamma$. Since $\lambda_1 I \succcurlyeq \Gamma \succcurlyeq \lambda_{d(K+1)} I$, we have 
\begin{align}
    \label{eq:L2_ineq_v1}
    \lambda_{d(K+1)}\tilde{U}_i^T \tilde{U}_i \preccurlyeq \frac{L_2(\boldsymbol{\beta})^{(i)T}L_2(\boldsymbol{\beta})^{(i)}}{J(2N+1)}
    &\preccurlyeq \lambda_1 \tilde{U}_i^T \tilde{U}_i,
\end{align}
where $A_i A_i^T = ||\widehat{\Theta}_{i:}||_2^2 I_{K+1}$ using the mixed product identity, i.e. $(A \otimes B)(C \otimes D) = (AC) \otimes (BD)$, and for all $i$, $||\widehat{\Theta}_{i:}||_2^2 = 2N+1$.

Then, using the mixed product identity again, 
\begin{equation}
    \begin{aligned}
        \label{eq:U_identity}
        \sum_{i=1}^{2P} \tilde{U}_i^T \tilde{U}_i &= \sum_{i=1}^{2P} I_{K+1} \otimes \widehat{U}_{i:}^T\widehat{U}_{i:}. \\
        &= I_{K+1} \otimes (\widehat{U}^T\widehat{U}) \\
        &= I_{K+1} \otimes I_d = I_{d(K+1)}.
    \end{aligned}
\end{equation}
and combining \eqref{eq:L2_ineq_v1} and \eqref{eq:U_identity} yields 
\begin{equation} \label{eq:Ybounds}
\begin{aligned}
    \lambda_{d(K+1)} I_{d(K+1)} \preccurlyeq \frac{\mathbf{Y}}{ J(2N+1)}
    \preccurlyeq \lambda_1 I_{d(K+1)}.
\end{aligned}
\end{equation}
This result indicates that the condition number of $\mathbf{Y}$ is bounded by $\lambda_1 / \lambda_{d(K+1)}$, i.e.
\begin{equation}
\kappa(\mathbf{Y}) \leq \kappa(\Gamma) = \lambda_1 / \lambda_{d(K+1)}.
\end{equation}
It follows that 
\begin{equation} \label{eq:KappaLtwo-bound}
    \kappa(\bold{L}_2(\boldsymbol{\beta})) \leq \sqrt{\kappa(\Gamma)},
\end{equation} 
establishing the upper bound on the condition number of $\bold{L}_2(\boldsymbol{\beta})$.

Next we state that based on the assumption $\Gamma \succ 0$, we have $\lambda_{d(K+1)}>0$, and $\mathbf{Y}$ and $\bold{L}_2(\boldsymbol{\beta})$ are full column rank.  Since $\Gamma \in \mathbb{R}^{(2N+1)(K+1) \times (2N+1)(K+1)}$, a necessary condition for $\Gamma \succ 0$ is $J \geq (2N+1)(K+1)$. A sufficient condition is that there are at least $(2N+1)(K+1)$ linearly independent vectors in the set $\{\beta(s_j), j = 1, \ldots, J \}$. 
\qedsymbol

\end{document}